\newcommand{\mbZ}{\mathbb Z}
\newcommand{\mbC}{\mathbb C}
\newcommand{\oM}{\overline{\mathcal M}}
\newcommand{\tu}{{\widetilde u}}
\newcommand{\og}{\overline g}
\newcommand{\oh}{\overline h}
\newcommand{\hLambda}{\widehat\Lambda}
\def\oM{{\overline{\mathcal{M}}}}
\def\d{{\partial}}
\renewcommand{\>}{\right>}
\newcommand{\eps}{\varepsilon}
\newcommand{\hcA}{\widehat{\mathcal{A}} }
\newcommand{\DR}{\mathrm{DR}}
\newcommand{\DZ}{\mathrm{DZ}}
\newcommand{\Coef}{\mathrm{Coef}}
\newcommand{\oG}{{\overline G}}
\newcommand{\oH}{{\overline H}}
\newcommand{\tC}{\widetilde C}
\newcommand{\Li}{\mathrm{Li}}
\newcommand{\cD}{{\mathcal{D}}}
\newcommand{\bfr}{\mathfrak{b}}
\newcommand{\nfr}{\mathfrak{n}}
\newcommand{\res}{\mathrm{res}}
\newcommand{\Dcl}{\mathcal{D}}
\newcommand{\Lsc}{\mathscr{L}}
\newcommand{\ts}{\textstyle}
\newtheorem{theorem}{Theorem}[section]
\newtheorem{proposition}[theorem]{Proposition}
\newtheorem{corollary}[theorem]{Corollary}
\theoremstyle{definition}
\numberwithin{equation}{section}
\title{Quantum ${D_4}$ Drinfeld-Sokolov hierarchy and quantum singularity theory}
\author{Ann du Crest de Villeneuve}
\address{A.~du Crest de Villeneuve:\newline LAREMA, UMR6093 CNRS, Universit\'e d'Angers, 49000 Angers, France}
\email{ducrest@math.univ-angers.fr}
\author{Paolo Rossi}
\address{P.~Rossi:\newline Dipartimento di Matematica ``Tullio Levi-Civita'', Universit\`a degli Studi di Padova,\newline
Via Trieste 63, 35121 Padova, Italy}
\email{paolo.rossi@math.unipd.it}
\begin{document}

\begin{abstract}
In this paper we compute explicitly the double ramification hierarchy and its quantization for the $D_4$ Dubrovin-Saito cohomological field theory obtained applying the Givental-Teleman reconstruction theorem to the $D_4$ Coxeter group Frobenius manifold, or equivalently the $D_4$ Fan-Jarvis-Ruan-Witten cohomological field theory  (with respect to the non-maximal diagonal symmetry group $\<J\> = \mbZ/3\mbZ$). We then prove its equivalence to the corresponding Dubrovin-Zhang hierarchy, which was known to coincide with the $D_4$ Drinfeld-Sokolov hierarchy. Our techniques provide hence an explicit quantization of the $D_4$ Drinfeld-Sokolov hierarchy. Moreover, since the DR hierarchy is well defined for partial CohFTs too, our approach immediately computes the DR hierarchies associated to the invariant sectors of the $D_4$ CohFT with respect to folding of the Dynkin diagram, the $B_3$ and $G_2$ Drinfeld-Sokolov hierarchies.
\end{abstract}

\markboth{A. du Crest de Villeneuve, P. Rossi}{Quantization of the $D_4$ Drinfeld-Sokolov hierarchy and FJRW theory}

\maketitle

\tableofcontents

\section*{Introduction}

In \cite{Dub99}, based on the results of K. Saito \cite{Sai81,Sai83a,Sai83b} and in particular on his theory of primitive forms, Dubrovin constructed a generically semisimple Frobenius manifold structure on the space of orbits of finite, irreducible Coxeter groups or, via the relation of the latter with simple hypersurface singularities, on the space of miniversal deformations of any polynomial $W:\mbC^n\to\mbC$ with an isolated critical point at the origin.\\

Although in general these Frobenius manifolds are not semisimple at the origin, it was proved in \cite{Mil14} that, for singularities of type $ADE$, using Givental-Teleman theory \cite{Giv01,Tel12} at a nearby semisimple point and then shifting the result back to the origin yields a well defined conformal cohomological field theory \cite{KM94} which we refer to as the Dubrovin-Saito CohFT of the singularity.\\

In \cite{FJR07,FJR13} Fan, Jarvis and Ruan, inspired by ideas of Witten, introduced another construction associating a certain moduli space of decorated curves and, through a virtual fundamental class on it, a conformal CohFT to a certain class of quasi-homogeneous polynomials with an isolated critical point at the origin, together with the choice of admissible symmetry group.\\

In case the symmetry group is the full automorphism group $G_\mathrm{max}$ of the polynomial, in a typical instance of mirror symmetry, there is an isomorphism between the FJRW CohFT and the Dubrovin-Saito CohFT of the transposed singularity \cite{FJR13}. When the symmetry group is smaller the mirror partner to the FJRW theory is in general not known, but for the case of $ADE$ singularities the only admissible groups are always maximal, with the exception of singularities of type $D_{2n}$ for which however the isomorphism with the Dubrovin-Saito theory still holds.\\

The case of $D_4$ is in many ways the most subtle among the $ADE$ singularities. For instance the general method of proof for the mirror symmetry result of \cite{FJR13} did not work in the $D_4$ case because of the specific form of the CohFT and its phase space, and indeed the proof for the mirror theorem was completed in \cite{FFJMR10}. In essence the complication (specifically the appearance of the so called broad sectors of the phase space) originates from the peculiar symmetry of the $D_4$ singularity. This is apparent from the corresponding Dynkin diagram, the most symmetric among the $ADE$ diagrams.\\

In this paper we focus on the example of the $D_4$ Dubrovin-Saito CohFT, or the FJRW CohFT of the singularity $D_4:W=x^3+xy^2$ with symmetry group $\langle J_1 \rangle$ where $J_1(x,y) = (e^{2 \pi i \frac{1}{3}}x,e^{2 \pi i \frac{1}{3}}y)$ and approach it from the point of view of the associated integrable hierarchies.\\

It is well known that, given a semisimple CohFT one can associate to it the Dubrovin-Zhang hierarchy \cite{DZ05}, an integrable hierarchy of Hamiltonian PDEs which controls the intersection theory of the CohFT with psi classes on the moduli space of stable curves. It was proved in \cite{GM05,FGM10} that the Dubrovin-Zhang hierarchy of the $ADE$ CohFTs are the Drinfeld-Sokolov hierarchies associated to the corresponding $ADE$ semisimple algebras.\\

The double ramification hierarchy is a more recent construction, introduced in \cite{Bur15,BR16b}, associating a quantum integrable system to a CohFT (there is no need for semisimplicity in this case, and in fact the construction even works in the classical limit for partial CohFTs).\\

This construction, inspired by ideas from symplectic field theory \cite{EGH00,FR10,Ros10}, uses intersection theory of the CohFT with the double ramification cycle, Hodge classes and psi classes, and its relation with the Dubrovin-Zhang hierarchy is the object of a conjecture (the strong DR/DZ equivalence conjecture of \cite{BDGR18,BGR17}) which predicts that, given a semisimple CohFT (so that the DZ side is well defined), the classical limit of the DR hierarchy and the DZ hierarchy coincide after a specific change of coordinates in the phase space of the system.\\

This conjecture has been proved for several specific CohFTs \cite{BDGR18,BDGR16b,BR18} and, in particular, in \cite{BG16,BDGR18}, for the $A_N$ Dubrovin-Saito CohFTs (which are nothing but the $(N+1)$-spin CohFTs) for $N\leq 4$.\\

In this paper we show that the classical double ramification hierarchy of the $D_4$ CohFT coincides with the $D_4$ Drinfeld-Sokolov hierarchy (with no need of coordinate change in this case) and hence, via the results of \cite{GM05}, with the Duborinv-Zhang hierarchy. Moreover we compute the quantum double ramification hierarchy which hence provides an explicit (closed formulas) quantization of the $D_4$ Drinfeld-Sokolov hierarchy.\\

Finally we exploit the particularly symmetric nature of the $D_4$ CohFT, which possesses both a $\mbZ_2$ and a $\mbZ_3$ symmetry, to consider its invariant parts as two partial CohFTs, for which the classical DR hierarchy gives rise to the $B_3$ and $G_2$ Drinfeld-Sokolov hierarchies, and thanks to the results of \cite{LRZ15} we can deduce that the partial CohFT potentials are tau functions of the corresponding DR hierarchy.\\

\subsection*{Acknowledgments} We would like to express our gratitude to A. Buryak as well as to M. Cafasso and V. Roubtsov for useful discussions. A. du Crest de Villeneuve also thanks the University of Angers, France, the University of Bourgogne, France and the European Marie Skłodowska-Curie action `IPaDEGAN' for offering them the best working conditions while working on the present paper.

\section{Drinfeld-Sokolov $D_4$ hierarchy}\label{DSD4}

\noindent In \cite{DS84}, Drinfeld and Sokolov described how to associate a hierarchy of integrable PDEs to any semi-simple Lie algebra through its associated affine Kac--Moody algebra. In the most general case, the hierarchy is constructed as an infinite sequence of \textit{matrix} Lax equations (zero-curvature equations). In the same article, for simple Lie algebras of types $A_n,B_n$ and $C_n$, the authors also provided a \textit{scalar} Lax pair along with a bi-Hamiltonian representation using pseudo-differential operators. For the $D_n$ case however, Drinfeld and Sokolov represented one part of the hierarchy with a scalar Lax pair and bi-Hamiltonian representation (\cite{DS84}, pp. 2019--2021), this part we call the \textit{positive flows} of the hierarchy, while the remaining part we call the \textit{negative flows}.

In \cite{LWZ10}, the authors produced a complete picture of the hierarchy of $D_n$ type (positive and negative flows) in terms of scalar Lax pairs and bi-Hamiltonian representation. To do so, they introduced a new kind of pseudo-differential operators called \textit{of the second type} (while the traditional ones are called \textit{of the first type)}. Roughly speaking, the operators of the second type are allowed to contain not only infinitely many nonzero terms in negative degree, but also infinitely many nonzero terms in positive degrees, along with considerations of gradation to ensure the well-definedness of the product of operators. In that same article \cite{LWZ10}, the authors also described the tau-symmetric bi-Hamiltonian structure (in the sense of \cite{DZ05}) of the Drinfeld--Sokolov hierarchy of $D_n$ type.

In what follows, we briefly review the original Drinfeld--Sokolov reduction for the $D_4$ case, for it entirely defines the hierarchy \cite{DS84}. Then we follow the approach of \cite{LWZ10} to define the scalar Lax pairs of the positive and negative flows of the hierarchy. Finally, we describe and compute the tau-symmetric bi-Hamiltonian structure of $D_4$.

\subsection{Matrix Lax equations of type $D_4$}

Let us briefly explain what these positive and negative flows are in terms of Lie algebras. Type $D_4$ is the Dynkin diagram of the simple Lie algebra $\mathfrak{o}(8)$. We denote by
\begin{align*}
\mathfrak{g} = \mathfrak{o}(8)\otimes \mathbb{C}[\lambda,\lambda^{-1}]
\end{align*}
the loop algebra of $\mathfrak{o}(8)$. We choose generators $\{e_i, f_i, h_i \;|\; 0 \leq i\leq 4\}$ of $\mathfrak{g}$ as in \cite{DS84} (also as in \cite{LWZ10}). We define the principal gradation $\mathfrak{g} = \bigoplus_{k\in \mathbb{Z}} \mathfrak{g}^k$ with $\deg e_i = -\deg f_i = 1$ and $\deg h_i = 0$. We use the notation $\mathfrak{g}^{>0} = \prod_{k>0} \mathfrak{g}^k$ and similarly for $\mathfrak{g}^{<0}$. We denote $\Lambda = \sum_{i = 0}^4 e_i \in \mathfrak{g}^1$ the principal cyclic element of $\mathfrak{g}$. Then it is well known \cite{Kos59,Kac78,DS84} that the principal Heisenberg subaglebra $\mathrm{Ker} (\Lambda)$ admits the following decomposition:
\begin{align*}
\mathrm{Ker} (\Lambda) = \bigoplus_{k\in \mathbb{Z}^\mathrm{odd}} \mathbb{C}\cdot \Lambda_k \oplus \bigoplus_{k\in\mathbb{Z}^{\mathrm{odd}}} \mathbb{C}\cdot \Gamma_k, \hspace{1cm} \Lambda_k\in \mathfrak{g}^k, \Gamma_k\in \mathfrak{g}^{3k}.
\end{align*}
(See \cite{DS84} for the expression of $\Lambda_k,\Gamma_k$.) Let us denote by $\bfr$ (resp. $\nfr$) the negative Borel (resp. negative nilpotent) subalgebra of $\mathfrak{o}(8)$. The starting point of the Drinfeld--Sokolov hierarchy is a matrix-valued differential operator of the form
\begin{align}\label{eqMatrixLaxGen}
\Lsc = \partial_x + \Lambda + q(x), \hspace{1cm} q\in \mathcal{C}^\infty(\mathbb{R},\bfr).
\end{align}
For any function $S\in\mathcal{C}^\infty(\mathbb{R}, \nfr)$, the operator $\tilde{\Lsc} = \mathrm{e}^{\mathrm{ad}_S}\Lsc$ also has the form of Equation \eqref{eqMatrixLaxGen}; we say that $\Lsc$ and $\tilde{\Lsc}$ are gauge equivalent. In \cite{DS84}, the authors proved the fundamental property that there exists a (non unique) function $U\in \mathcal{C}^\infty(\mathbb{R}, \mathfrak{g}^{<0})$ such that the operator $\Lsc_0 = \mathrm{e}^{\mathrm{ad}_U}\Lsc$ has the form
\begin{align*}
\Lsc_0 = \partial_x + \Lambda + H(x), \hspace{1cm} H\in\mathcal{C}^\infty(\mathbb{R}, \mathrm{Ker} (\Lambda)\cap \mathfrak{g}^{<0}).
\end{align*}
(Note that $\Lsc$ and $\Lsc_0$ are not gauge equivalent). The Drinfeld--Sokolov hierarchy of type $D_4$ is then defined as the two sequences of matrix Lax equations:
\begin{align}\label{eqMatrixLaxEqDn_positive}
\frac{\partial\Lsc}{\partial t_k} &= \left[ -\left( \mathrm{e}^{\mathrm{ad}_U} \Lambda_k \right)^{\geq 0}, \Lsc \right],\\
\frac{\partial\Lsc}{\partial \hat{t}_k} &= \left[ \left( \mathrm{e}^{\mathrm{ad}_U} \Gamma_k \right)^{\geq 0}, \Lsc \right],\label{eqMatrixLaxEqDn_negative}
\end{align}
for $k\in\mathbb{Z}^\mathrm{odd}_+$, where $(\,\cdot\,)^{\geq 0}$ denotes the projection onto the subspace $\mathfrak{g}^{\geq 0}$. These equations eventually take the form of evolutionary PDEs on the coordinates of the matrix $q$. Changing the representative of $\Lsc$ in its gauge equivalence class amounts to changing coordinates. They do not depend on the choice of the function $U$. The flows with respect to the variable $t_k$ \eqref{eqMatrixLaxEqDn_positive} are those we call the positive flows, while those with respect to the variables $\hat{t}_k$ \eqref{eqMatrixLaxEqDn_negative} are the negative flows.

\subsection{Salar Lax pairs}\label{section:positive flows}
	
We will start by describing the scalar Lax pairs of the positive flows \eqref{eqMatrixLaxEqDn_positive}. We follow the formal algebraic approach to pseudo-differential operators as in \cite{LWZ10}. We start by setting indeterminates $s^1_k,\ldots, s^4_k$, with $k\geq 0$ and denote $s^\alpha_0 = s^\alpha$. Define a formal derivation 
\begin{align*}
\d_x = \sum_{\alpha = 1}^4\sum_{k\geq 0} s^\alpha_{k+1} \frac{\partial}{\partial s^\alpha_k},
\end{align*}
so that $s^\alpha_{k+1} = \partial_x(s^\alpha_k)$ acting on the ring of \textit{generalized differential polynomials} 
\begin{align*}
\hcA_s = \mathbb{C}[[s^*]][s^*_{>0}] [[\eps]],
\end{align*}
where $s^*$ denotes $s^1,\ldots, s^4$. In what follows we will mostly drop the index in $\hcA_s$, simply writing $\hcA$ when no confusion can arise. We define a gradation $\hcA = \bigoplus_{k\geq 0} \hcA^{[k]}$ by setting $\deg s^\alpha_k = k$ and $\deg\eps = -1$. Changes of coordinates $s^\alpha\to \widetilde{s}^\alpha$ will be described by transformations of the form
\begin{align*}
\widetilde{s}^\alpha(s^*_{\geq 0};\eps) \in \hcA^{[0]}, \hspace{1cm} 0 \neq \det\left( \frac{\partial \left. \widetilde{s}^\alpha\right|_{\eps = 0}}{\partial s^\beta} \right)_{\alpha,\beta = 1,\ldots, 4},
\end{align*}
called \textit{Miura transforms} and should be seen as maps from $\hcA_s$ to $\hcA_{\widetilde s}$. We refer to \cite{Ros17} for more details on generalized differential polynomials. As usual, we define the ring of pseudo-differential operators that, here, we call \textit{of the first type} (following the vocabulary of \cite{LWZ10}),
\begin{align}
\mathcal{D}^- = \left\{\left. \sum_{k = -\infty}^m a^k\partial_x^k \; \right| \; m\in \mathbb{Z}, a^k\in \hcA^{[0]} \right\}.
\end{align}
The product of two pseudo-differential operators of the first type is defined by, for any $a,b\in \mathcal{D}^-$ and any $n,m\in \mathbb{Z}$,
\begin{align}\label{eqProductOfOperators}
a\partial_x^n\circ b\partial_x^m = \sum_{k\geq 0} \binom{n}{k}\, a b_k \eps^k \partial_x^{n+m-k}, & & \binom{n}{k} = \frac{n(n-1)\cdots (n-k+1)}{k!},
\end{align}
where $b_k = \partial_x^k(b)$, and extended by linearity. Notice that we added a factor $\eps^k$ to be coherent with the introduction of generalized differential polynomials. For any operator $X = \sum_{k\leq m} a^k\partial_x^k$, we call the positive part (respectively the negative part) of $X$ the operator $X_+ = \sum_{k\geq 0} a^k\partial_x^k$ (respectively $X_- = \sum_{k<0} a^k\partial_x^k$).

As already defined by Drinfeld and Sokolov \cite{DS84}, the scalar Lax operator of the hierarchy of type $D_4$ has the form\footnote{Since the coordinate $s^4$ is treated differently then the others, we use letters $\mu, \nu$ for coordinates $s^1$, $s^2$, $s^3$ and letters $\alpha, \beta$ for coordinates $s^1$, $s^2$, $s^3$ and $s^4$.}
\begin{align}\label{eqLaxOperator}
L = \partial_x^{6} + \partial_x^{-1}\sum_{\mu = 1}^{3} \left(s^\mu\partial_x^{2\mu-1} + \partial_x^{2\mu-1}s^\mu \right) + \partial_x^{-1}\varrho\partial_x^{-1}\varrho,
\end{align}
where $(\varrho)^2 = s^4$. As expected, $L\in \mathcal{D}^-$. The operator $L$ satisfies the additional condition that $L^*\partial_x = \partial_x L$, where the formal adjoint of an operator is defined by $(a\partial_x^k)^* = (-\partial_x)^k a$. In \cite{LWZ10}, the authors give the following proposition.

\begin{proposition}
There exists a unique pseudo-differential operator $P\in \mathcal{D}^-$, called the 6-th root of $L$, of the form
\begin{align*}
P = \partial_x + \sum_{k<0} p^k\partial_x^k
\end{align*}
such that $P^{6} = L$. The operator $P$ satisfies $[P,L] = 0$ and
\begin{align}\label{eqAdjointCondition6thRoot}
P^*\partial_x + \partial_x P = 0.
\end{align}
Moreover, Equation \eqref{eqAdjointCondition6thRoot} is equivalent to the condition that for every $k\in \mathbb{Z}^{\mathrm{odd}}_+$, the free term of the operator $(P^k)_+$ vanishes, i.e. $(P^k)_+(1) = 0$.
\end{proposition}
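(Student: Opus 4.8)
The plan is to prove the four assertions one at a time, with essentially all the work in the last. It is convenient to work with the involutive anti-automorphism of $\mathcal D^-$ given by $A\mapsto A^{\dagger}:=\partial_x^{-1}A^{*}\partial_x$; one checks directly that $(A^{\dagger})^{\dagger}=A$, that $(AB)^{\dagger}=B^{\dagger}A^{\dagger}$, and that $A$ and $A^{\dagger}$ have the same coefficient of $\partial_x^{0}$ (this last reduces to $\sum_{l=0}^{i}(-1)^{l}\binom{i}{l}=(1-1)^{i}=0$ for $i\geq1$). In this notation \eqref{eqAdjointCondition6thRoot} is the assertion $P^{\dagger}=-P$.

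For existence and uniqueness I would determine the coefficients of $P=\partial_x+\sum_{i\geq1}p^{-i}\partial_x^{-i}$ recursively. Expanding $P^{6}$ with the product rule \eqref{eqProductOfOperators}, the coefficient $p^{-i}$ first appears in the coefficient of $\partial_x^{\,5-i}$ in $P^{6}$, where it enters linearly with scalar coefficient $6$; the coefficients $p^{-(i+1)},p^{-(i+2)},\dots$ contribute only to coefficients of $\partial_x^{\leq4-i}$, and contributions quadratic or higher in $p^{-i}$ only to coefficients of $\partial_x^{\leq4-2i}$. Since $L$ from \eqref{eqLaxOperator} has leading term $\partial_x^{6}$ and no $\partial_x^{5}$-term, equating coefficients in $P^{6}=L$ yields, for each $i\geq1$, an equation $6\,p^{-i}=c_{5-i}-R_i(p^{-1},\dots,p^{-(i-1)})$ with $c_{5-i}\in\hcA^{[0]}$ the corresponding coefficient of $L$ and $R_i$ a differential polynomial in the previously found coefficients; this determines $p^{-i}\in\hcA^{[0]}$ uniquely. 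Collecting the $p^{-i}$ gives the unique $P$, and $[P,L]=0$ is then automatic since $L=P^{6}$ commutes with $P$.

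For $P^{\dagger}=-P$ I would again invoke uniqueness. Set $R:=-P^{\dagger}\in\mathcal D^{-}$; then, using $(P^{*})^{6}=(P^{6})^{*}=L^{*}$,
\[
R^{6}=(P^{\dagger})^{6}=(P^{6})^{\dagger}=L^{\dagger}=\partial_x^{-1}L^{*}\partial_x=\partial_x^{-1}(\partial_x L)=L,
\]
where the last two equalities use the hypothesis $L^{*}\partial_x=\partial_x L$. Since $R=-P^{\dagger}$ has leading term $\partial_x$ and vanishing coefficient of $\partial_x^{0}$, it has the normalized form of the sixth root of $L$, so by the uniqueness just established $R=P$, i.e. $P^{\dagger}=-P$, which is \eqref{eqAdjointCondition6thRoot}.

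Finally, for the equivalence: if $P^{\dagger}=-P$, then for odd $k$ we get $(P^{k})^{\dagger}=(P^{\dagger})^{k}=-P^{k}$, so $(P^{k})^{\dagger}+P^{k}=0$; comparing coefficients of $\partial_x^{0}$ and using that $A$ and $A^{\dagger}$ share this coefficient gives $2\,(P^{k})_{+}(1)=0$, hence $(P^{k})_{+}(1)=0$. Conversely, suppose $(P^{k})_{+}(1)=0$ for all odd $k>0$ and put $B:=P^{\dagger}+P$. Comparing leading terms gives that $B$ has order $\leq-1$, and applying $\dagger$ to $P^{\dagger}=-P+B$ together with $(P^{\dagger})^{\dagger}=P$ gives $B^{\dagger}=B$, which forces the order of $B$, if $B\neq0$, to be a negative \emph{even} integer $-m_{0}$ with $m_{0}\geq2$ and nonzero leading coefficient $b$. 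Expanding $(P^{\dagger})^{k}=(-P+B)^{k}$ and keeping the part linear in $B$, for odd $k$,
\[
(P^{k})^{\dagger}+P^{k}=\sum_{i=0}^{k-1}P^{i}BP^{k-1-i}+O(B^{2}),
\]
and the right-hand side has order exactly $k-1-m_{0}$ with leading coefficient $k\,b$ (each of the $k$ summands contributes $b$ at that order, while the quadratic-in-$B$ part has order $\leq k-2-2m_{0}<k-1-m_{0}$). Taking $k=m_{0}+1$, which is odd and positive because $m_{0}$ is even, makes $k-1-m_{0}=0$, so the coefficient of $\partial_x^{0}$ of $(P^{k})^{\dagger}+P^{k}$ equals $k\,b$; but this coefficient also equals $2\,(P^{k})_{+}(1)=0$, forcing $b=0$, a contradiction. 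Hence $B=0$. I expect this converse implication to be the crux: the delicate points are identifying which power $k$ can detect a hypothetical nonzero $B$ — where the parity identity $B^{\dagger}=B\Rightarrow m_{0}$ even $\Rightarrow m_{0}+1\in\mathbb{Z}^{\mathrm{odd}}_{+}$ is essential — and controlling the quadratic-in-$B$ remainder in the leading-order estimate; the other three assertions are comparatively routine.
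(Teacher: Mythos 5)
The paper does not prove this proposition at all: it is quoted verbatim from \cite{LWZ10} (``In \cite{LWZ10}, the authors give the following proposition''), so there is no in-paper argument to compare yours against. Judged on its own, your proof is correct and is the standard route one would expect: the order-by-order recursion (using that $L$ has no $\partial_x^{5}$-term and that $p^{-i}$ enters linearly with coefficient $6$) gives existence and uniqueness of the normalized sixth root; anti-self-adjointness with respect to $A\mapsto A^{\dagger}=\partial_x^{-1}A^{*}\partial_x$ then follows from uniqueness applied to $-P^{\dagger}$, using $L^{*}\partial_x=\partial_x L$; and the equivalence with $(P^{k})_{+}(1)=0$ rests on the exact identity that $A$ and $A^{\dagger}$ have the same free term (your binomial reduction $\sum_{l=0}^{i}(-1)^{l}\binom{i}{l}=0$ is indeed what the computation comes down to), together with the leading-order analysis of $B=P^{\dagger}+P$ in which the parity forced by $B^{\dagger}=B$ selects the odd power $k=m_{0}+1$.

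One small imprecision: ``comparing leading terms'' only gives $\mathrm{ord}\,B\le 0$, not $\le-1$; to exclude a free term in $B$ you should invoke the shared-free-term property together with the normalization $(P)_{+}(1)=0$ (or simply note that the case $m_{0}=0$ is harmless, since then $k=1\in\mathbb{Z}^{\mathrm{odd}}_{+}$ and the same coefficient comparison kills $b$). With that remark the argument is complete; the estimate on the part of $(-P+B)^{k}$ at least quadratic in $B$, of order $\le k-2-2m_{0}<k-1-m_{0}$, is correct, and the coefficient ring $\hcA^{[0]}$ is an integral domain over $\mbC$, so $kb=0$ indeed forces $b=0$.
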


\noindent The above proposition implies that the following equations are well defined:
\begin{align}\label{eqScalarPositiveFlows}
\frac{\partial L}{\partial t_k} = \left[ \left( P^k\right)_+, L\right], \hspace{1cm} k\in \mathbb{Z}^\mathrm{odd}_+
\end{align}
These equation first appeared in \cite{DS84}, they coincide with Equations \eqref{eqMatrixLaxEqDn_positive}. Eventually, they give evolutionary PDEs on the functions $s^1,\ldots, s^4$.\\

\noindent \textbf{Scalar Lax pairs of the negative flows.} We now describe the scalar Lax pairs for the negative flows \eqref{eqMatrixLaxEqDn_negative}.
As mentioned above, we need to introduce the pseudo-differential operators of the second type, following \cite{LWZ10}. First of all, we extend the additive group $\mathcal{D}^-$ of pseudo-differential operators of the first type into the additive group
\begin{align*}
\mathcal{D} = \left\{\left. \sum_{k\in \mathbb{Z}} a^k\partial_x^k  \;\right|\; a^k\in \hcA^{[0]} \right\}.
\end{align*}
We use on $\hcA^{[0]}$ the gradation induced by $\deg s^\alpha_k = k$ and denote $\hcA^{[0]} = \bigoplus_{k\geq 0} \hcA^{[0]}_k$. We say that an operator $X\in \mathcal{D}^-\subset \mathcal{D}$ is \textit{homogeneous} of degree $k\in \mathbb{Z}$ if
\begin{align*}
X = \sum_{\ell\leq k} a^\ell \partial_x^{\ell}, \hspace{1cm} a^\ell\in \hcA^{[0]}_{k-\ell}.
\end{align*}
We denote by $\mathcal{D}_{k}$ the additive subgroup of $\mathcal{D}^-$ of homogeneous operators of the first type of degree $k$. Then $\mathcal{D} = \prod_{k\in\mathbb{Z}} \mathcal{D}_{k}.$ Similarly, for any $d\in\mathbb{Z}$ we define the additive subgroup $\mathcal{D}^+_{(d)} = \prod_{k\geq d} \mathcal{D}_{k}$. Now we define the additive subgroup
\begin{align*}
\mathcal{D}^+ = \bigcup_{d\in\mathbb{Z}} \mathcal{D}^+_{(d)}.
\end{align*}
The elements of $\mathcal{D}^+$ are the so-called \textit{pseudo-differential operators of the second type.} For these operators, we allow infinitely many nonzero terms of negative powers, as well as infinitely many nonzero terms of positive powers of $\partial_x$. In the end, the spaces $\mathcal{D}^-$ and $\mathcal{D}^+$ can be described as
\begin{align*}
\mathcal{D}^- &= \left\{\left. \sum_{k \leq m}\sum_{i\geq 0} a^{k,i}\partial_x^k \; \right| \; a^{k,i}\in \hcA^{[0]}_i \right\} & \mathcal{D}^+ &= \left\{\left. \sum_{k \geq m}\sum_{i\geq m-k} a^{k,i}\partial_x^k \; \right| \; a^{k,i}\in \hcA^{[0]}_i \right\}
\end{align*}
Both are complete topological rings. Also, we define the subgroup
\begin{align*}
\mathcal{D}^\mathrm{b} = \mathcal{D}^- \cap \mathcal{D}^+,
\end{align*}
whose elements are called \textit{bounded operators}. Since for any $X\in \mathcal{D}_{k}$, $Y\in \mathcal{D}_{\ell}$, their product satisfies $X\circ Y\in \mathcal{D}_{k+\ell}$, it follows that the product \eqref{eqProductOfOperators} in $\mathcal{D}^-$ can be extended to $\mathcal{D}^+$ to form a ring. The following proposition will be important for our computations \cite{LWZ10}.

\begin{proposition}
There exists a unique operator $Q\in \mathcal{D}^+$, caleld the square root of $L$, of the form
\begin{align}\label{eqSquareRootOfL}
Q = \partial_x^{-1}\varrho + \sum_{k\geq 0} Q_m\circ \partial_x
\end{align}
such that $Q^2 = L$. Here, $Q_m\in \Dcl^\mathrm{b}$ , is homogeneous of degree $2m$ and satisfies $Q_m^* = Q_m$. Moreover, the operator $Q$ satisfies
\begin{align}\label{eqSquareRootOfLProp}
Q^*\partial_x + \partial_x Q = 0, & & Q^*_+(\varrho) = - \sum_{m\geq 0} \partial_x Q_m(\varrho) = -\frac{1}{2}\partial_x L_+(1).
\end{align}
\end{proposition}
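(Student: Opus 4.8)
This is recalled from \cite{LWZ10}; here is the structure of the argument, in three steps.

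\emph{Construction of $Q$.} The plan is to build $Q$ recursively with respect to the operator gradation. First note that $L$ is a \emph{finite} sum of homogeneous components:
\begin{align*}
L = \underbrace{\partial_x^6}_{\deg 6}\; +\; \underbrace{\partial_x^{-1}(s^3\partial_x^5+\partial_x^5 s^3)}_{\deg 4}\; +\; \underbrace{\partial_x^{-1}(s^2\partial_x^3+\partial_x^3 s^2)}_{\deg 2}\; +\; \underbrace{\partial_x^{-1}(s^1\partial_x+\partial_x s^1)}_{\deg 0}\; +\; \underbrace{\partial_x^{-1}\varrho\partial_x^{-1}\varrho}_{\deg -2}\,,
\end{align*}
so $L=\bigoplus_{j=-1}^{3}L^{[2j]}$. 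An operator of the prescribed shape decomposes as $Q=Q^{[-1]}\oplus\bigoplus_{m\ge0}Q^{[2m+1]}$ with $Q^{[-1]}=\partial_x^{-1}\varrho$, and $Q^{[2m+1]}=Q_m\circ\partial_x$ is homogeneous of degree $2m+1$ precisely when $Q_m$ is homogeneous of degree $2m$. Expanding $Q^2$ and separating homogeneous degrees, $Q^2=L$ becomes $(\partial_x^{-1}\varrho)^2=L^{[-2]}$ (which holds identically) together with, for every $d\ge0$,
\begin{align*}
\partial_x^{-1}\varrho\circ Q_d\circ\partial_x \;+\; Q_d\circ\varrho \;=\; L^{[2d]}\;-\!\!\!\sum_{\substack{a+b=d-1\\a,b\ge0}}\!\!\!Q_a\circ\partial_x\circ Q_b\circ\partial_x\;=:\;R_d\,,
\end{align*}
with $L^{[2d]}:=0$ for $d\ge4$. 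Since $R_d$ depends only on $Q_0,\dots,Q_{d-1}$ this is a genuine recursion, and the heart of the proof is the linear step: one checks that $R_d$ is bounded, homogeneous of degree $2d$, and satisfies $R_d^*\partial_x=\partial_x R_d$ (using $L^*\partial_x=\partial_x L$, the inductive hypothesis $Q_a^*=Q_a$, and the relabelling $a\leftrightarrow b$ in the double sum), and that the linear map $Y\mapsto\partial_x^{-1}\varrho\circ Y\circ\partial_x+Y\circ\varrho$ is a bijection from the bounded operators homogeneous of degree $2d$ onto the subspace of those $R$ with $R^*\partial_x=\partial_x R$. Unique solvability is proved by descending induction on the operator order — the top coefficient of $Y$ is forced because $\varrho$ is invertible in the coefficient ring, and the lower ones follow in turn — and self-adjointness $Q_d^*=Q_d$ is automatic, since $Q_d$ and $Q_d^*$ satisfy the same equation. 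This produces $Q$ and all $Q_m$ uniquely with the stated properties (and $[Q,L]=0$, as $Q$ commutes with $Q^2=L$). The main obstacle is precisely this solvability claim: one must carry the argument out inside the ring $\mathcal{D}$ of operators of the second type while keeping track of boundedness and homogeneity order by order.

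\emph{The identity $Q^*\partial_x+\partial_x Q=0$.} Set $\widehat{Q}:=\sum_{m\ge0}Q_m$, so $Q=\partial_x^{-1}\varrho+\widehat{Q}\circ\partial_x$ and $\widehat{Q}^*=\widehat{Q}$. Using $(\partial_x^{-1}\varrho)^*=-\varrho\partial_x^{-1}$ and $(Q_m\circ\partial_x)^*=-\partial_x\circ Q_m^*=-\partial_x\circ Q_m$, a direct computation gives $-\partial_x^{-1}\circ Q^*\circ\partial_x=\partial_x^{-1}\varrho+\widehat{Q}\circ\partial_x=Q$, i.e.\ $Q^*\partial_x=-\partial_x Q$. (Consistently, the same identity yields $(-\partial_x^{-1}Q^*\partial_x)^2=\partial_x^{-1}(Q^*)^2\partial_x=\partial_x^{-1}L^*\partial_x=L$.)

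\emph{The identities \eqref{eqSquareRootOfLProp}.} Being self-adjoint, $\widehat{Q}$ has vanishing residue, and comparing orders in $\widehat{Q}=(\widehat{Q}_+)^*+(\widehat{Q}_-)^*$ (where $\widehat{Q}_-$ has order $\le-2$) shows $(\widehat{Q}_+)^*=\widehat{Q}_+$. Hence $Q_+=\widehat{Q}_+\circ\partial_x$, $Q_+^*=-\partial_x\circ\widehat{Q}_+$, and therefore $Q_+^*(\varrho)=-\partial_x\bigl(\widehat{Q}_+(\varrho)\bigr)=-\sum_{m\ge0}\partial_x Q_m(\varrho)$, which is the first equality. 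For the second, equivalently $\sum_{m\ge0}Q_m(\varrho)=\tfrac12 L_+(1)$, one extracts the $\partial_x^0$-coefficient of $Q^2=L$: of the four pieces $(\partial_x^{-1}\varrho)^2$, $\partial_x^{-1}\varrho\circ\widehat{Q}\circ\partial_x$, $\widehat{Q}\circ\partial_x\circ\partial_x^{-1}\varrho=\widehat{Q}\circ\varrho$ and $(\widehat{Q}\circ\partial_x)^2$, the first contributes nothing, the second and third each contribute $\widehat{Q}_+(\varrho)$ (for the second one uses $(\varrho\widehat{Q})^*=\widehat{Q}\circ\varrho$), while the fourth has vanishing $\partial_x^0$-coefficient. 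This last vanishing is the one step that is not purely formal: it is not implied by the symmetry properties of $\widehat{Q}$ alone, and must be read off from the recursion defining the $Q_m$, as in \cite{LWZ10}. Applying $\partial_x$ and changing sign then matches all three displayed expressions.
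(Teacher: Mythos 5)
The paper itself does not prove this proposition: it is recalled verbatim from \cite{LWZ10}, so your reconstruction has to be judged against that reference's recursion argument, whose overall shape (homogeneous decomposition of $L$ into degrees $-2,0,2,4,6$, recursion in the degree, unique solvability of the linear step, symmetry of $R_d$ via $L^*\partial_x=\partial_x L$ and the $a\leftrightarrow b$ relabelling) you reproduce correctly. Two of your auxiliary claims, however, are off as stated. First, the map $Y\mapsto\partial_x^{-1}\varrho\circ Y\circ\partial_x+Y\circ\varrho$ is not a bijection from \emph{all} bounded homogeneous operators of degree $2d$ onto the symmetric ones: when $\varrho$ is invertible it is bijective onto \emph{all} homogeneous degree-$2d$ operators (the order-$\ell$ coefficient of the image is $2\varrho\,y^\ell$ plus terms in higher-order coefficients), so it cannot biject onto the proper subspace $\{R: R^*\partial_x=\partial_x R\}$; the correct statement is either full bijectivity, or that it restricts to a bijection from self-adjoint $Y$ onto symmetric $R$ — and what your argument actually uses (injectivity, plus the fact that $Y$ and $Y^*$ solve the same equation when $R_d$ is symmetric) survives this correction. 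Second, $\varrho$ is \emph{not} invertible in the coefficient ring $\hcA$ as defined in the paper ($\varrho=\sqrt{s^4}$ is not even an element of $\mbC[[s^*]][s^*_{>0}][[\eps]]$); the recursion genuinely requires division by $\varrho$ — already the top coefficient of $Q_0$ is $s^1/\varrho$ — so one must enlarge the coefficient ring by $\varrho^{\pm1}$, as is implicitly done in \cite{LWZ10}. Incidentally, both $Q_m^*=Q_m$ and $Q^*\partial_x+\partial_x Q=0$ also follow at once from uniqueness: $-\partial_x^{-1}Q^*\partial_x=\partial_x^{-1}\varrho+\widehat{Q}^*\circ\partial_x$ is again of the prescribed form and squares to $\partial_x^{-1}L^*\partial_x=L$, hence equals $Q$.

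The genuine gap is the final identity. Your reductions are sound: $Q_+=\widehat{Q}_+\circ\partial_x$ (this is where the vanishing of $\res\widehat{Q}$ is really used), $Q_+^*=-\partial_x\circ\widehat{Q}_+$, and the extraction of the free term of $Q^2=L$ showing that $\sum_m Q_m(\varrho)=\tfrac12 L_+(1)$ is \emph{equivalent} to the vanishing of the $\partial_x^0$-coefficient of $(\widehat{Q}\circ\partial_x)^2$, i.e.\ of $\res(\widehat{Q}\,\partial_x\,\widehat{Q})$. You correctly observe (and one can check on examples of self-adjoint operators with nonzero $\partial_x^{-2}$-part) that this vanishing is not a formal consequence of $\widehat{Q}^*=\widehat{Q}$; the trace/adjoint identities only give it modulo $\mathrm{Im}\,\partial_x$, while the stated identity is exact. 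At this point you defer to \cite{LWZ10} instead of extracting the vanishing from the recursion defining the $Q_m$, so the one step that carries the actual content of the second half of \eqref{eqSquareRootOfLProp} is asserted rather than proved; as a self-contained proof the proposal is therefore incomplete exactly there, even though as a faithful outline of the cited argument it is accurate.
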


\noindent Notice that in the above proposition, $Q^*_+(\varrho)$ denotes the evaluation of the differential operator $Q^*_+$ at the function $\varrho$.  Thanks to this proposition, the following equations are well defined (we rewrite Equation \eqref{eqScalarPositiveFlows} for reasons of clarity):
\begin{equation}\label{eqScalarLaxDn}
\frac{\partial L}{\partial t_k} = \left[ \left( P^k\right)_+, L\right], \hspace{1cm} \frac{\partial L}{\partial \hat{t}_k} = \left[ \left( Q^k\right)_+, L\right], \hspace{1cm} k\in \mathbb{Z}^\mathrm{odd}_+.
\end{equation}
Finally, we can state the following theorem (\cite{LWZ10}, Theorem 4.11), which gives the complete picture of the scalar Lax pairs representation of the Drinfeld--Sokolov hierarchy of type $D_4$.

\begin{theorem}
The flows \eqref{eqMatrixLaxEqDn_positive}, \eqref{eqMatrixLaxEqDn_negative} of the Drinfeld--Sokolov  hierarchy of type $D_4$ coincide with the flows of Equation \eqref{eqScalarLaxDn}
\end{theorem}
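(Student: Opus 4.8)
The plan is to prove the equivalence of the two matrix Lax equations \eqref{eqMatrixLaxEqDn_positive}--\eqref{eqMatrixLaxEqDn_negative} with the scalar Lax equations \eqref{eqScalarLaxDn} by following the standard Drinfeld--Sokolov reduction procedure: translating the matrix flow, which lives in the loop algebra $\gfr = \mathfrak{o}(8)\otimes\mbC[\lambda,\lambda^{-1}]$, into a scalar flow on a pseudo-differential operator through an explicit realization of $\mathfrak{o}(8)$ in a matrix gauge where the operator $\Lsc$ can be brought to a canonical form encoding exactly the Lax operator $L$ of \eqref{eqLaxOperator}. The key point is that the resolvent $\mathrm{e}^{\mathrm{ad}_U}\Lambda_k$ (resp.\ $\mathrm{e}^{\mathrm{ad}_U}\Gamma_k$) appearing in the matrix flows corresponds, under this realization, to the positive part of $P^k$ (resp.\ $Q^k$): indeed $\Lambda_k$ generates the graded piece $\gfr^k$ inside the principal Heisenberg subalgebra $\mathrm{Ker}(\Lambda)$, which is precisely the piece whose scalar incarnation is the $6$-th root $P$ of $L$ in degree $k$, while $\Gamma_k$ lives in $\gfr^{3k}$, matching the square root $Q$.

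The steps, in order, would be as follows. First, fix the explicit matrix model: realize $\mathfrak{o}(8)$ with the principal nilpotent $\Lambda = \sum_{i=0}^4 e_i$ acting as $\partial_x$ on a suitable cyclic module, so that the gauge-reduced operator $\Lsc$ with $q \in \bfr$ becomes, after eliminating all but one row of components by gauge transformations $\mathrm{e}^{\mathrm{ad}_S}$ with $S \in \nfr$, a single scalar equation; Drinfeld and Sokolov carried this out in \cite{DS84}, and the $D_n$ specifics (the $\partial_x^{-1}\varrho\partial_x^{-1}\varrho$ tail with $\varrho^2 = s^4$) were made precise in \cite{LWZ10}. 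Second, identify the abstract commuting elements: by the Kostant--Kac decomposition quoted above, $\mathrm{Ker}(\Lambda) = \bigoplus_{k}\mbC\Lambda_k \oplus \bigoplus_k \mbC\Gamma_k$ with $\Lambda_k \in \gfr^k$ and $\Gamma_k \in \gfr^{3k}$; under the matrix-to-scalar dictionary these map to the unique pseudo-differential operators commuting with $L$ of the prescribed form, namely $P^k$ with leading term $\partial_x^k$ and $Q^k$ with leading term $\partial_x^{-k}\varrho^k$-type behaviour, whose existence and uniqueness are exactly the content of the two Propositions above. Third, match the projections: show that $(\mathrm{e}^{\mathrm{ad}_U}\Lambda_k)^{\geq 0}$ translates into $(P^k)_+$ and $(\mathrm{e}^{\mathrm{ad}_U}\Gamma_k)^{\geq 0}$ into $(Q^k)_+$ — here the adjoint/self-adjointness conditions \eqref{eqAdjointCondition6thRoot} and \eqref{eqSquareRootOfLProp}, in particular the vanishing of the free term $(P^k)_+(1)=0$, are precisely what makes the scalar projection well defined and consistent with the matrix projection onto $\gfr^{\geq 0}$. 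Fourth, conclude that $\partial_{t_k}\Lsc = [-(\mathrm{e}^{\mathrm{ad}_U}\Lambda_k)^{\geq 0}, \Lsc]$ pushes forward to $\partial_{t_k} L = [(P^k)_+, L]$ and likewise for the $\hat t_k$ flows with $Q$, and check that the sign and normalization conventions agree (the sign discrepancy between \eqref{eqMatrixLaxEqDn_positive} and \eqref{eqMatrixLaxEqDn_negative} is absorbed into the definition of the scalar operators).

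The main obstacle I expect is step three: controlling how the non-uniquely-defined dressing $U \in \mathcal{C}^\infty(\mbR,\gfr^{<0})$ interacts with the gauge fixing that produces the scalar operator. One must verify that the ambiguity in $U$ (which changes $\mathrm{e}^{\mathrm{ad}_U}\Lambda_k$ by something in $\mathrm{Ker}(\Lambda)\cap\gfr^{<0}$, hence does not affect the $\gfr^{\geq 0}$-projection modulo terms that commute with $\Lsc$) is exactly matched on the scalar side by the fact that $[P,L]=0$ and $[Q^2,L]=0$, so that only the strictly positive part of the resolvent contributes to the flow. Equivalently, one has to show that the operators of the second type in $\mathcal{D}^+$ — which were introduced precisely to make the square root $Q$ well defined despite $L$ having a $\partial_x^{-1}\varrho\partial_x^{-1}\varrho$ tail — yield, upon taking $(Q^k)_+$, genuinely differential operators whose commutator with $L$ is again a pseudo-differential operator of the first type (so \eqref{eqScalarLaxDn} is an honest evolution equation for $s^1,\dots,s^4$). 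Since all of this was already established in \cite{LWZ10} (this is their Theorem 4.11), the proof is essentially a citation with a sketch of why the dictionary is an isomorphism of hierarchies; I would not reprove the $D_n$ reduction from scratch but rather recall the correspondence $\Lambda_k \leftrightarrow P^k$, $\Gamma_k \leftrightarrow Q^k$ and note that it intertwines the two families of flows.
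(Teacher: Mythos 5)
Your proposal is correct and matches the paper's treatment: the paper does not reprove this statement but simply quotes it as Theorem 4.11 of \cite{LWZ10} (with the positive flows going back to \cite{DS84}), which is exactly the citation-plus-dictionary ($\Lambda_k \leftrightarrow P^k$, $\Gamma_k \leftrightarrow Q^k$) that you propose. Your sketch of the underlying Drinfeld--Sokolov reduction is a reasonable outline of what that reference establishes, so no further argument is needed here.
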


\subsection{Bi-Hamiltonian structure} Here we give the two compatible Poisson brackets and Hamiltonian densities for the first Hamiltonian structure of the Drinfeld--Sokolov hierarchy of type $D_4$. They were given in \cite{DS84}, Proposition 8.3. To do so, we need to introduce the operator $\tilde{L} = \partial_x\circ L$ (denoted $\mathcal{L}$ in \cite{LWZ10}), which we write down using coordinates $v^\alpha$, namely,
\begin{align}
\tilde{L} &= \partial_x^{7} + \sum_{\mu = 1}^{3} \left(s^\mu\partial_x^{2\mu-1} + \partial_x^{2\mu-1} s^\mu \right) + \varrho\partial_x^{-1} \varrho \nonumber\\
	&= \partial_x^{7} + \sum_{\mu = 1}^{3} \left( v^\mu\partial_x^{2\mu -1} + \tilde{v}^\mu\partial_x^{2\mu-2} \right) + \varrho\partial_x^{-1} \varrho,\label{eqCoordinatesV}
\end{align}
and we set $v^4 = s^4 = (\varrho)^2$. The coordinates $\tilde{v}^\alpha$ are related to the coordinates $v^\alpha$ via the condition $\tilde{L}^* + \tilde{L} = 0$ (see Equation \eqref{eqVTildeToV}). In turns, the coordinates $v^\alpha$ are related to the coordinates $s^\alpha$ via a Miura transform by identifying the two expressions of $\tilde{L}$.

We call \textit{local functionals}, the elements of the quotient space 
$$\hLambda = \hcA/ \left( \mathrm{Im}(\partial_x)\oplus\mbC[[\eps]] \right).$$
Given a differential polynomial $f\in \hcA $, we denote by $\overline{f} = \int f dx$ its class in $\hLambda$. For a local functional $\overline{f}\in \hLambda$ in formal variables $v^*_*$, we can define its variational derivatives by
\begin{align*}
\frac{\delta \overline{f}}{\delta v^\alpha} = \sum_{k\geq 0} (-1)^k \partial_x^k\left(  \frac{\partial f}{\partial v^\alpha_k}\right) \in \hcA.
\end{align*}
It is well known that if $f\in \hcA$ is such that $f(0) = 0$ (no constant term), then $f\in \mathrm{Im}(\partial_x)$ if and only if $\smash{\frac{\delta \overline{f}}{\delta v^\alpha} = 0}$ for any $\alpha\in\{1,2,3,4\}$ (see e.g. \cite{GKMZ70}, Lemma 2). In particular, the variational derivatives are well defined on $\hLambda$. Next we define the variational differential (or variational derivative w.r.t. $\tilde{L}$) by
\begin{align*}
\frac{\delta \overline{f}}{\delta \tilde{L}} = \frac{\delta \overline{f}}{\delta v^4} + \frac{1}{2} \sum_{\mu = 1}^{3} \left( \frac{\delta \overline{f}}{\delta v^\mu} \partial_x^{-2\mu} + \partial_x^{-2\mu}\frac{\delta \overline{f}}{\delta v^\mu} \right) \in \Dcl^-.
\end{align*}
In \cite{DS84}, Drinfeld and Sokolov gave the following Poisson brackets: Let two local functionals $\overline{f},\overline{g}\in \hLambda$ and their variational differentials $X = \frac{\delta \overline{f}}{\delta \tilde{L}}, Y = \frac{\delta \overline{g}}{\delta \tilde{L}}$, then
\begin{align*}
\{ \overline{f},\overline{g}\}_1 &= \eps^{-1} \int \mathrm{res} \, X \left[ (\partial_x Y_+\tilde{L})_- - (\tilde{L} Y_+\partial_x)_- - (\partial_x Y_-\tilde{L})_+ + (\tilde{L} Y_-\partial_x)_+ \right] \ dx,\\
\{ \overline{f},\overline{g}\}_2 &= \eps^{-1}\int \mathrm{res} \, X \left[ (\tilde{L} Y)_+\tilde{L} - \tilde{L} (Y\tilde{L})_+ \right]\ dx.
\end{align*}
These brackets are compatible in the sense that for any $\lambda,\mu \in \mathbb{C}$, the map $\lambda\{\cdot,\cdot\}_1 + \mu\{\cdot,\cdot\}_2$ still satisfies Jacobi's identity. In \cite{LWZ10}, the authors proved the following theorem.

\begin{theorem}
The hierarchy \eqref{eqScalarLaxDn} admits the following bi-Hamiltonian representation: for any local functional $\overline{f}\in \hLambda$,
\begin{align}\label{eqHamD4}
\frac{\partial \overline{f}}{\partial t_k} &= \{\overline{f}, H_{k+6}\}_1 = \{\overline{f}, H_k\}_2, &
\frac{\partial \overline{f}}{\partial \hat{t}_k} &= \{\overline{f}, \hat{H}_{k+2}\}_1 = \{\overline{f}, \hat{H}_k\}_2,
\end{align}
where the Hamiltonian functionals are given by
\begin{align*}
H_k = \frac{6}{k}\int \mathrm{res}\, P^k dx, \hspace{1cm} \hat{H}_k = \frac{2}{k} \int \mathrm{res}\, Q^k dx
\end{align*}
\end{theorem}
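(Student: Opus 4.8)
The plan is to follow the Adler--Gelfand--Dickey strategy for turning fractional-power Lax equations into bi-Hamiltonian flows --- the argument already used in \cite{DS84} for the positive flows of type $D_n$ and extended in \cite{LWZ10} to the negative flows --- reducing the statement to two ingredients: a closed formula for the variational differentials $\delta H_k/\delta\Lt$ and $\delta\hat H_k/\delta\Lt$, and an operator identity that turns these, after substitution into $\{\cdot,\cdot\}_1$ and $\{\cdot,\cdot\}_2$, into the Lax flows \eqref{eqScalarLaxDn}. Throughout I would work with $\Lt=\partial_x\circ L$ and the duality $\overline f\mapsto\delta\overline f/\delta\Lt$ introduced before the statement, and use freely the two propositions above on the $6$-th root $P$ and the square root $Q$, in particular the adjointness relations $P^*\partial_x+\partial_x P=0$ and $Q^*\partial_x+\partial_x Q=0$ and the normalizations $P^6=L=Q^2$.

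First I would compute the variational differentials of the Hamiltonians. Using $\delta R^k=\sum_{j=0}^{k-1}R^j(\delta R)R^{k-1-j}$ together with the cyclicity of $\int\res$ and $[R,L]=0$ (valid for $R=P$, and for $R=Q$ once the residue pairing has been made sense of, see below), the variation of $\frac1k\int\res R^k\,dx$ collapses: one gets $\delta H_k=\int\res\big(P^{k-6}\,\delta L\big)$ and $\delta\hat H_k=\int\res\big(Q^{k-2}\,\delta L\big)$, which is exactly why the normalizations $6/k$ and $2/k$ are the right ones. Rewriting $\delta L$ in terms of $\delta\Lt$ via $\Lt=\partial_x L$, and symmetrizing so that the resulting operator lies in the image of $\delta/\delta\Lt$ --- which is forced by the constraint $\Lt^*+\Lt=0$ and the adjointness of $P$ and $Q$ --- identifies $\delta H_k/\delta\Lt$ with a prescribed truncation of $P^{k-6}$ and $\delta\hat H_k/\delta\Lt$ with one of $Q^{k-2}$.

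Next I would substitute these into the Poisson brackets. Taking $Y=\delta H_k/\delta\Lt$ in the quadratic bracket $\{\overline f,\overline g\}_2=\eps^{-1}\int\res X[(\Lt Y)_+\Lt-\Lt(Y\Lt)_+]\,dx$, using $\Lt=\partial_x L$ and $[P^k,L]=0$ (so that $[(P^k)_+,L]=-[(P^k)_-,L]$), the combination $(\Lt Y)_+\Lt-\Lt(Y\Lt)_+$ simplifies --- modulo terms killed by $\res$ against the constrained $X$ --- to $\partial_x[(P^k)_+,L]$, whence the bracket equals $\int\res\big(X\,\partial_x[(P^k)_+,L]\big)\,dx=\partial\overline f/\partial t_k$. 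The same computation in the linear bracket $\{\cdot,\cdot\}_1$ with $Y=\delta H_{k+6}/\delta\Lt$ produces the same flow: the recursion operator interpolating the two Hamiltonian structures is, up to the constraint, multiplication by $\Lt$, and appending a factor $L$ to a Hamiltonian density raises the exponent of $P$ by $6$ and of $Q$ by $2$ since $L=P^6=Q^2$. This simultaneously establishes $\{\overline f,H_{k+6}\}_1=\{\overline f,H_k\}_2$ and its analogue for $\hat H$; the negative flows are handled verbatim with $P\rightsquigarrow Q$ and the shift $6\rightsquigarrow 2$. Compatibility of $\{\cdot,\cdot\}_1$ and $\{\cdot,\cdot\}_2$ and involutivity of the Hamiltonians follow from the same residue identities (or may simply be quoted from \cite{DS84,LWZ10}).

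The main obstacle, and the only place where this is not the classical $A_n$/$C_n$ story, is that $Q\in\mathcal{D}^+$: it has infinitely many terms of positive order, so a priori neither $\res Q^k$ nor the pairing $\int\res(XY)$ makes sense, and the cyclicity of the residue on which everything above rests must be re-proved. The remedy is the gradation bookkeeping recalled before the statement: in each homogeneous component $\mathcal{D}_m$ only finitely many monomials contribute, so $\int\res Q^k\,dx$ is a well-defined element of $\hLambda$, a product $X\circ Y$ with $X\in\mathcal{D}^-$ and $Y\in\mathcal{D}^+$ of matching degree lands in the bounded ring $\mathcal{D}^\mathrm{b}$, and cyclicity holds for such bounded products --- this is where I expect the bulk of the technical work to go. In addition, the second relation of the square-root proposition, $Q^*_+(\varrho)=-\tfrac12\partial_x L_+(1)$, is precisely what is needed to see that $[(Q^k)_+,L]$ preserves the constraint $L^*\partial_x=\partial_x L$, equivalently $\Lt^*+\Lt=0$, so that the negative flows are genuine evolutionary PDEs on $s^1,\dots,s^4$ in the first place; checking this carefully is the heart of \cite{LWZ10} and is what distinguishes $D_4$ (and general $D_n$) from the other classical types.
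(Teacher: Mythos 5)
The paper does not prove this statement at all: it is quoted directly from \cite{LWZ10} (together with \cite{DS84}, Proposition 8.3, for the brackets), so there is no internal proof to compare against. Your outline faithfully reconstructs the strategy of those references --- the Adler--Gelfand--Dickey computation of $\delta H_k/\delta \Lt$ and $\delta \hat H_k/\delta \Lt$ as truncations of $P^{k-6}$ and $Q^{k-2}$, the residue/cyclicity manipulation turning the two brackets into the Lax flows and giving the Lenard--Magri shift $H_k\mapsto H_{k+6}$, $\hat H_k\mapsto\hat H_{k+2}$ via $L=P^6=Q^2$ --- and it correctly identifies where the genuinely new work lies (well-definedness and cyclicity of $\res$ for operators of the second type, and constraint preservation via $Q^*_+(\varrho)=-\tfrac12\partial_x L_+(1)$), although it defers exactly those points to \cite{LWZ10} rather than carrying them out, so it is a sound sketch of the cited proof rather than an independent one.
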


	\subsection{Tau structure}

Finally, we present the tau structure (in the sense of \cite{DZ05}) of the Drinfeld--Sokolov hierarchy of type $D_4$. The first step is to define the so-called topological variables: for $\mu\in \{1, 2, 3\}$ and $p\geq 0$,
\begin{align*}
t^\mu_p = \frac{6\Gamma\left(p+1+\frac{2\mu-1}{6}\right)}{\Gamma\left( \frac{2\mu-1}{6}\right)} t_{6p+2\mu-1}, \hspace{1cm} t^4_p = \frac{2\Gamma\left( p+\frac{3}{2} \right)}{\Gamma(\frac{1}{2})} \hat{t}_{2p+1}.
\end{align*}
In particular, $t^1_0 = t_1 = x$. In the same fashion, we define the following Hamiltonian densities:
\begin{align}\label{eqTauSymHamDens}
h_{\mu,p-1} &= \frac{\Gamma\left( \frac{2\mu-1}{6} \right)}{6 \Gamma\left( p+1+\frac{2\mu-1}{6} \right)} \mathrm{res}\, P^{6p +2\mu-1}, & h_{4,p-1} &= \frac{\Gamma(\frac{1}{2})}{2\Gamma\left( p+\frac{3}{2} \right)} \mathrm{res}\, Q^{2p+1},\\
 &= 6^p\prod_{j = 0}^{p} (2\mu-1 + 6j)^{-1}\, \mathrm{res}\, P^{6p +2\mu-1}, & &= 2^p\prod_{j=0}^p(1+2j)^{-1}\, \mathrm{res}\, Q^{2p+1}. \nonumber
\end{align}
We denote by $\overline{h}_{\alpha,p} = \int h_{\alpha,p} dx$ the associated functionals. Then the Hamiltonian equations \eqref{eqHamD4} read
\begin{align*}
\frac{\partial \overline{f}}{\partial t^\alpha_p} = \{ \overline{f}, \oh_{\alpha,p}\}_1 = \left( p+\textstyle\frac{1}{2} + \mu_\alpha \right)^{-1} \{\overline{f}, \oh_{\alpha,p-1} \}_2,
\end{align*}
where $\mu_\alpha$'s are the spectrum of the underlying Frobenius manifold \cite{Dub96,DLZ08}; they read $\mu_\nu = \frac{2\nu - 4}{6}$, for $\nu\in \{1,2,3\}$, and $\mu_4 = 0$. These Hamiltonian densities satisfy the so-called tau symmetry \cite{DZ05}, \cite{LWZ10}: for any $\alpha,\beta\in \{1,2,3,4\}$ and $p,q\geq 1$,
\begin{align}
\frac{\partial h_{\alpha,p-1}}{\partial t^\beta_q} = \frac{\partial h_{\beta,q-1}}{\partial t^\alpha_p}.
\end{align}
If we define differential polynomials $\Omega_{\alpha,p; \beta,q} $ via $\partial_x\,\Omega_{\alpha,p; \beta,q} = \partial_{t^\beta_q} h_{\alpha,p-1} $, then we can find integration constants such that they satisfy $\Omega_{\alpha,p;\beta,q} = \Omega_{\beta,q;\alpha,p}$. This is a sufficient condition to define the tau function $\tau$ of the hierarchy by setting \cite{DZ05}
\begin{align*}
\frac{\partial^2 \log\tau}{\partial t^\alpha_p \partial t^\beta_q} = \Omega_{\alpha,p;\beta,q}.
\end{align*}

	\subsection{Explicit tau-symmetric Hamiltonian densities}

In this section, we give the explicit fomulae for the Hamiltonian densities $h_{\alpha,-1}$, $h_{\alpha,0}$, for $1\leq \alpha \leq 4$, and $h_{1,1}$. As proven in \cite{BG16}, even $\overline{h}_{1,1}$ (and the Poisson structure) suffices to confirm the equivalence with the double ramification hierarchy (see Section \ref{section:DRclassical}). Our computations will be expressed in a special set of coordinates called \textit{normal} and defined by
\begin{align*}
\tu^\alpha = \eta^{\alpha\beta}h_{\beta,-1}
\end{align*}
(summation over $1\leq \beta\leq 4$ is implicit) where the matrix $(\eta^{\alpha\beta})$ is given by (see e.g. \cite{LRZ15})
\begin{align*}
(\eta^{\alpha\beta}) = \begin{pmatrix}
0 & 0 & 6 & 0\\
0 & 6 & 0 & 0\\
6 & 0 & 0 & 0\\
0 & 0 & 0 & 2
\end{pmatrix}
\end{align*}
By Equation \eqref{eqTauSymHamDens}, the normal coordinates read, for $\mu\in\{1,2,3\}$,
\begin{align*}
\tu^\mu = \frac{6}{7-2\mu} \res\, P^{7-2\mu}, \hspace{1cm} \tu^4 = 2\,\res\, Q.
\end{align*}
Because of the form of the operator $Q$ \eqref{eqSquareRootOfL}, its residue is straightforwardly $\res\,Q = \varrho$, which means that $\tu^4 = 2\varrho = 2\sqrt{s^4}$. For the remaining normal coordinates, we compute the residues in the coordinates $s^\alpha$ and then invert the system. We find
\begin{align}\label{eqMiuraTransDStoNormal}
\left\{ \begin{aligned}
s^1 &= \ts \frac{1}{2}\tu^1 + \frac{1}{12}\tu^2\tu^3 + \frac{1}{216}\left(  \tu^3 \right) ^3 + \left( - \frac{1}{8}\left(  \tu^3_1 \right) ^2 - \frac{1}{6}\tu^2_2 - \frac{1}{9}\tu^3\tu^3_2 \right)\eps^2 + \frac{23}{90}\tu^3_4\eps^4,\\
s^2 &= \ts \frac{1}{2}\tu^2 + \frac{1}{8}\left(  \tu^3 \right) ^2 - \frac{1}{2} \tu^3_2\eps^2,\\
s^3 &= \ts \frac{1}{2}\tu^3,\\
s^4 &= \ts \frac{1}{4}\left( \tu^4 \right)^2.
\end{aligned} \right.
\end{align}
(We have performed the substitution $\partial_x^k(f) \to \left( \eps /\sqrt{2} \right)^k \partial_x^k(f)$.) This agrees with the expressions found in \cite{LRZ15}, p. 751. Then the Hamiltonian densities $h_{\alpha,0}$ are given by, for $\mu\in\{1,2,3\}$,
\begin{align*}
h_{\mu,0} = \frac{6}{(2\mu-1)(2\mu+5)} \res\, P^{2\mu+5}, \hspace{1cm} h_{4,0} = \frac{2}{3}\res\, Q^3
\end{align*}
For the computation $\res\, Q^3$ we write
\begin{align*}
\res\, Q^3 = \res\, QL &= \res\left( \partial_x^{-1}\varrho L \right) + \res\left( \sum_{m\geq 0}Q_m \partial_x L \right),\\
 &= \res\left( \partial_x^{-1}\varrho L \right) + \res\left( {\sum_{m\geq 0}} Q_m\varrho\partial_x^{-1}\varrho \right),
\end{align*}
where in the last equation we used the fact that $\partial_x L = \tilde{L} = \tilde{L}_+ + \varrho\partial_x^{-1}\varrho$. To compute the rightmost term, we write $Q_m = \sum_{k\leq 2m} q^{m,k}\partial_x^k$, then
\begin{align*}
\res\left( \sum_{m\geq 0} Q_m\varrho\partial_x^{-1}\varrho \right) &= \res\left( \sum_{m\geq 0}\sum_{k\leq 2m} q^{m,k}\partial_x^k \varrho\partial_x^{-1}\varrho \right)\\
 &= \res\left( \sum_{m\geq 0} \sum_{k\leq 2m} q^{m,k} \sum_{\ell = 0}^{k} \binom{k}{\ell} \left(\frac{\eps}{\sqrt{2}}\right)^\ell \varrho_{\ell} \partial_x^{k-1 -\ell} \varrho \right)\\
 &= \sum_{m\geq 0} \sum_{k\leq 2m} q^{m,k} \left(\frac{\eps}{\sqrt{2}}\right)^k \varrho_{k}\varrho = \varrho \sum_{m\geq 0} Q_m(\varrho).
\end{align*}
In the above equations, all matters of convergence are resolved by the grading of $\hcA$ and the fact that $Q\in \Dcl^+$. Now thanks to Equation \eqref{eqSquareRootOfLProp}, it follows that
\begin{align*}
\sum_{m\geq 0} Q_m(\varrho) = \frac{1}{2}L_+(1) = \sum_{\mu = 1}^3 s^\mu_{2\mu - 2} \frac{\eps^{2\mu-2}}{2^{\mu-1}}.
\end{align*}
Using $\tu^4 = 2\varrho$, we finally find that
\begin{align*}
h_{4,0} = \frac{1}{3} \res\left(\partial_x^{-1}\tu^4 L\right) + \frac{1}{3}\tu^4\sum_{\mu = 1}^3 s^\mu_{2\mu -2}\frac{\eps^{2\mu-2}}{2^{\mu-1}}.
\end{align*}
 We find:
\begin{align*}
\begin{split}
h_{1,0}\ =\ & \ts
\ts \left(\frac{1}{12}\left(\tu^2\right)^2+\frac{1}{6}\tu^1 \tu^3+\frac{1}{4}\left(\tu^4\right)^2\right)+\left(\frac{1}{72} \tu^3 \left(\tu_1^3\right)^2+\frac{1}{3}\tu_2^1+\frac{1}{216} \left(\tu^3\right)^2 \tu_2^3\right) \eps
   \ts ^2+\left(\frac{1}{216} \left(\tu_2^3\right)^2+\frac{1}{72} \tu_1^3 \tu_3^3\right. \\
   & \ts \left. +\frac{1}{180} \tu^3 \tu_4^3\right) \eps ^4+\frac{1}{840} \tu_6^3 \eps ^6\\
   h_{2,0}\ =\ & \ts
\ts \left(\frac{1}{6}\tu^1 \tu^2-\frac{1}{72} \left(\tu^2\right)^2 \tu^3+\frac{1}{648} \tu^2 \left(\tu^3\right)^3+\frac{1}{24} \tu^3 \left(\tu^4\right)^2\right)+\left(-\frac{1}{24} \left(\tu_1^2\right)^2+\frac{1}{24} \tu^3 \tu_1^2
   \ts \tu_1^3+\frac{1}{72} \tu^2 \left(\tu_1^3\right)^2 \right. \\
   & \ts \left. + \frac{1}{8} \left(\tu_1^4\right)^2-\frac{1}{18} \tu^2 \tu_2^2+\frac{1}{72} \left(\tu^3\right)^2 \tu_2^2+\frac{1}{54} \tu^2 \tu^3 \tu_2^3+\frac{1}{6} \tu^4 \tu_2^4\right)
   \ts \eps ^2+\left(\frac{1}{12} \tu_2^2 \tu_2^3+\frac{5}{72} \tu_1^3 \tu_3^2+\frac{1}{18} \tu_1^2 \tu_3^3 \right. \\
   & \ts \left. +\frac{1}{36} \tu^3 \tu_4^2+\frac{2}{135} \tu^2 \tu_4^3\right) \eps ^4+\frac{1}{72} \tu_6^2 \eps ^6
   \end{split}
\end{align*}
\begin{align*}
\begin{split}
\ts h_{3,0}\ =\ &
\ts \left(\frac{1}{12}\left(\tu^1\right)^2-\frac{1}{216}\left(\tu^2\right)^3+\frac{1}{432} \left(\tu^2\right)^2 \left(\tu^3\right)^2+\frac{1}{233280}\left(\tu^3\right)^6+\frac{1}{24} \tu^2 \left(\tu^4\right)^2+\frac{1}{144}
   \ts \left(\tu^3\right)^2 \left(\tu^4\right)^2\right)+ \\
   & \ts \left(\frac{5}{432} \tu^3 \left(\tu_1^2\right)^2+\frac{1}{108} \tu^3 \tu_1^1 \tu_1^3+\frac{1}{36} \tu^2 \tu_1^2 \tu_1^3+\frac{1}{1296}\left(\tu^3\right)^3
   \ts \left(\tu_1^3\right)^2+\frac{1}{12} \tu^4 \tu_1^3 \tu_1^4+\frac{5}{144} \tu^3 \left(\tu_1^4\right)^2 \right. \\
   & \ts \left.+\frac{1}{216} \left(\tu^3\right)^2 \tu_2^1+\frac{1}{54} \tu^2 \tu^3 \tu_2^2+\frac{1}{108} \left(\tu^2\right)^2
   \ts \tu_2^3+\frac{1}{3888}\left(\tu^3\right)^4 \tu_2^3+\frac{1}{36} \left(\tu^4\right)^2 \tu_2^3+\frac{1}{18} \tu^3 \tu^4 \tu_2^4\right) \eps ^2 \\
   & \ts +\left(\frac{5}{2592}\left(\tu_1^3\right)^4+\frac{13}{720}
   \ts \left(\tu_2^2\right)^2+\frac{55}{3888} \tu^3 \left(\tu_1^3\right)^2 \tu_2^3+\frac{29}{1080} \tu_2^1 \tu_2^3+\frac{67}{15552} \left(\tu^3\right)^2 \left(\tu_2^3\right)^2+\frac{13}{240}
   \ts \left(\tu_2^4\right)^2 \right. \\
   & \ts \left.+\frac{1}{30} \tu_1^3 \tu_3^1+\frac{31}{1080} \tu_1^2 \tu_3^2+\frac{11}{1080} \tu_1^1 \tu_3^3+\frac{49}{7776} \left(\tu^3\right)^2 \tu_1^3 \tu_3^3+\frac{31}{360} \tu_1^4 \tu_3^4+\frac{1}{60} \tu^3 \right.
   \ts \tu_4^1+\frac{2}{135} \tu^2 \tu_4^2 \\
   & \ts \left. +\frac{1}{1296}\left(\tu^3\right)^3 \tu_4^3+\frac{2}{45} \tu^4 \tu_4^4\right) \eps ^4+\left(\frac{1129}{116640} \left(\tu_2^3\right)^3+\frac{1601}{38880} \tu_1^3 \tu_2^3\tu_3^3+\frac{1}{120} \tu^3 \left(\tu_3^3\right)^2+\frac{19}{1620} \left(\tu_1^3\right)^2 \tu_4^3 \right. \\
   & \ts \left. +\frac{29}{2160} \tu^3 \tu_2^3 \tu_4^3+\frac{13}{1944} \tu^3 \tu_1^3 \tu_5^3+\frac{11}{1080} \tu_6^1 + \frac{17}{19440}\left(\tu^3\right)^2 \tu_6^3\right) \eps ^6+\left(\frac{191}{43200} \left(\tu_4^3\right)^2+\frac{1501}{194400} \tu_3^3 \tu_5^3\right.\\
   & \ts \left. +\frac{949}{194400} \tu_2^3 \tu_6^3+\frac{127}{64800} \tu_1^3 \tu_7^3+\frac{13}{32400} \tu^3 \tu_8^3\right) \eps ^8+\frac{7}{118800} \tu_{10}^3 \eps ^{10}\\
h_{4,0}\ =\ & \ts 
\ts \left(\frac{1}{2}\tu^1 \tu^4+\frac{1}{12} \tu^2 \tu^3 \tu^4+\frac{1}{216} \left(\tu^3\right)^3 \tu^4\right)+\left(\frac{1}{24} \tu^4 \left(\tu_1^3\right)^2+\frac{1}{4} \tu_1^2 \tu_1^4+\frac{1}{8} \tu^3 \tu_1^3 \tu_1^4+\frac{1}{6} \tu^4
   \ts \tu_2^2 \right. \\
   & \ts \left. +\frac{1}{18} \tu^3 \tu^4 \tu_2^3+\frac{1}{6} \tu^2 \tu_2^4+\frac{1}{24} \left(\tu^3\right)^2 \tu_2^4\right) \eps ^2+\left(\frac{1}{4} \tu_2^3 \tu_2^4+\frac{1}{6} \tu_1^4 \tu_3^3+\frac{5}{24} \tu_1^3 \tu_3^4+\frac{2}{45}
   \ts \tu^4 \tu_4^3+\frac{1}{12} \tu^3 \tu_4^4\right) \eps ^4  \\
   &  \ts +\frac{1}{24} \tu_6^4 \eps ^6
\end{split}
\end{align*}
Finally, the density $h_{1,1}$ is given by
\begin{align*}
h_{1,1} = \frac{36}{91} \res\, P^{13}.
\end{align*}
We find:
\begin{align}\label{eq:DS11}
\begin{split}
h_{1,1} =\ & \textstyle \left( \frac{1}{12} \tu^1 \left(\tu^2\right)^2+\frac{1}{12} \left(\tu^1\right)^2u^3-\frac{1}{108} \left(\tu^2\right)^3 \tu^3+\frac{1}{432} \left(\tu^2\right)^2 \left(\tu^3\right)^3+\frac{1}{326592}\left(\tu^3\right)^7+\frac{1}{4}\tu^1 \left(\tu^4\right)^2 \right.\\
	& \ts \left. +\frac{1}{12} \tu^2 \tu^3 \left(\tu^4\right)^2+\frac{1}{144} \left(\tu^3\right)^3 \left(\tu^4\right)^2 \right) + \left( \frac{1}{6} \left(\tu^1_1\right)^2+\frac{1}{3} \tu^1_2u^1-\frac{5}{72} \left(\tu^2_1\right)^2 \tu^2+\frac{5}{24} \left(\tu^4_1\right)^2 \tu^2-\frac{1}{18}\tu^2_2 \left(\tu^2\right)^2 \right.\\
	& \ts \left. +\frac{1}{54} \left(\tu^3_1\right)^2\left(\tu^2\right)^2+\frac{1}{72} \left(\tu^3_1\right)^2 \tu^1u^3+\frac{19}{216} \tu^2_1u^3_1 \tu^2 \tu^3+\frac{1}{36} \tu^3_2 \left(\tu^2\right)^2u^3+\frac{1}{54} \left(\tu^2_1\right)^2 \left(\tu^3\right)^2+\frac{1}{72} \tu^1_1 \tu^3_1 \left(\tu^3\right)^2 \right.\\
	& \ts \left. +\frac{1}{18}\left(\tu^4_1\right)^2 \left(\tu^3\right)^2+\frac{1}{216} \tu^3_2 \tu^1\left(\tu^3\right)^2+\frac{7}{216} \tu^2_2 \tu^2 \left(\tu^3\right)^2+\frac{1}{216} \tu^1_2\left(\tu^3\right)^3+\frac{7}{7776} \left(\tu^3_1\right)^2 \left(\tu^3\right)^4 \right.\\
	& \ts \left. +\frac{1}{3888}\tu^3_2 \left(\tu^3\right)^5+\frac{5}{12}\tu^2_1 \tu^4_1 \tu^4+\frac{1}{3}\tu^4_2 \tu^2 \tu^4+\frac{19}{72} \tu^3_1 \tu^4_1 \tu^3 \tu^4+\frac{7}{72}\tu^4_2 \left(\tu^3\right)^2 \tu^4+\frac{1}{6} \tu^2_2 \left(\tu^4\right)^2 \right.\\
	& \ts \left. +\frac{1}{18}\left(\tu^3_1\right)^2 \left(\tu^4\right)^2+\frac{1}{12} \tu^3_2 \tu^3\left(\tu^4\right)^2 \right)\eps^2 + \left( \frac{7}{24} \tu^2_1 \tu^2_2 \tu^3_1+\frac{5}{72} \tu^1_2\left(\tu^3_1\right)^2+\frac{31}{216} \left(\tu^2_1\right)^2 \tu^3_2+\frac{13}{216} \tu^1_1u^3_1 \tu^3_2 \right.\\
	& \ts \left. +\frac{31}{72} \tu^3_2\left(\tu^4_1\right)^2+\frac{7}{8} \tu^3_1 \tu^4_1 \tu^4_2+\frac{1}{216} \left(\tu^3_2\right)^2 \tu^1+\frac{1}{72}\tu^3_1 \tu^3_3 \tu^1+\frac{35}{216} \tu^2_3 \tu^3_1u^2+\frac{23}{108} \tu^2_2 \tu^3_2 \tu^2+\frac{4}{27} \tu^2_1 \tu^3_3 \tu^2 \right.\\
	& \ts \left. +\frac{7}{270} \tu^3_4\left(\tu^2\right)^2+\frac{13}{144} \left(\tu^2_2\right)^2u^3+\frac{29}{216} \tu^2_1 \tu^2_3 \tu^3+\frac{1}{9} \tu^1_3 \tu^3_1 \tu^3+\frac{7}{648} \left(\tu^3_1\right)^4u^3+\frac{19}{216} \tu^1_2 \tu^3_2 \tu^3+\frac{7}{216} \tu^1_1u^3_3 \tu^3 \right.\\
	& \ts \left. +\frac{13}{48} \left(\tu^4_2\right)^2 \tu^3+\frac{29}{72}\tu^4_1 \tu^4_3 \tu^3+\frac{1}{180}\tu^3_4 \tu^1 \tu^3+\frac{7}{108} \tu^2_4 \tu^2 \tu^3+\frac{1}{36} \tu^1_4\left(\tu^3\right)^2+\frac{53}{1296} \left(\tu^3_1\right)^2 \tu^3_2\left(\tu^3\right)^2 \right.\\
	& \ts \left. 
+\frac{133}{15552}\left(\tu^3_2\right)^2 \left(\tu^3\right)^3+\frac{97}{7776} \tu^3_1u^3_3 \left(\tu^3\right)^3+\frac{1}{864} \tu^3_4 \left(\tu^3\right)^4+\frac{4}{9}\tu^3_3 \tu^4_1 \tu^4+\frac{23}{36} \tu^3_2 \tu^4_2u^4+\frac{35}{72} \tu^3_1 \tu^4_3 \tu^4 \right.\\
	& \ts \left. +\frac{7}{36} \tu^4_4 \tu^3u^4+\frac{7}{90} \tu^3_4 \left(\tu^4\right)^2 \right) \eps^4 + \left( \frac{349}{3024} \left(\tu^2_3\right)^2+\frac{4}{21} \tu^2_2 \tu^2_4+\frac{17}{168} \tu^2_1 \tu^2_5+\frac{65}{504} \tu^1_5 \tu^3_1+\frac{151}{756}\tu^1_4 \tu^3_2 \right.\\
	& \ts \left. +\frac{473}{2592} \left(\tu^3_1\right)^2 \left(\tu^3_2\right)^2+\frac{31}{168} \tu^1_3 \tu^3_3+\frac{19}{216} \left(\tu^3_1\right)^3 \tu^3_3+\frac{19}{210}\tu^1_2 \tu^3_4+\frac{149}{7560} \tu^1_1 \tu^3_5+\frac{349}{1008}\left(\tu^4_3\right)^2+\frac{4}{7} \tu^4_2 \tu^4_4 \right.\\
	& \ts \left. +\frac{17}{56} \tu^4_1 \tu^4_5+\frac{1}{840}\tu^3_6 \tu^1+\frac{53}{1512} \tu^2_6 \tu^2+\frac{65}{1512} \tu^1_6u^3+\frac{1741}{23328} \left(\tu^3_2\right)^3 \tu^3+\frac{2507}{7776}\tu^3_1 \tu^3_2 \tu^3_3u^3+\frac{593}{6480} \left(\tu^3_1\right)^2 \tu^3_4u^3 \right.\\
	& \ts \left. +\frac{271}{7776} \left(\tu^3_3\right)^2 \left(\tu^3\right)^2+\frac{2141}{38880} \tu^3_2 \tu^3_4\left(\tu^3\right)^2+\frac{341}{12960} \tu^3_1 \tu^3_5 \left(\tu^3\right)^2+\frac{43}{19440} \tu^3_6 \left(\tu^3\right)^3+\frac{53}{504} \tu^4_6u^4 \right)\eps^6 + \left( \frac{1}{54}\tu^1_8 \right.\\
	& \ts \left. +\frac{653}{1944} \tu^3_2 \left(\tu^3_3\right)^2+\frac{17803}{68040} \left(\tu^3_2\right)^2 \tu^3_4+\frac{11129}{30240}\tu^3_1 \tu^3_3 \tu^3_4+\frac{65141}{272160}\tu^3_1 \tu^3_2 \tu^3_5+\frac{145}{3024} \left(\tu^3_1\right)^2u^3_6+\frac{17503}{302400} \left(\tu^3_4\right)^2u^3 \right.\\
	& \ts \left. +\frac{977}{10080} \tu^3_3 \tu^3_5u^3+\frac{15103}{272160} \tu^3_2 \tu^3_6 \tu^3+\frac{1831}{90720}\tu^3_1 \tu^3_7 \tu^3+\frac{19}{9720} \tu^3_8 \left(\tu^3\right)^2 \right)\eps^8 + \left( \frac{9973}{340200} \left(\tu^3_5\right)^2+\frac{1301}{25200} \tu^3_4 \tu^3_6 \right.\\
	& \ts \left. +\frac{347}{10080} \tu^3_3 \tu^3_7+\frac{4427}{272160} \tu^3_2u^3_8+\frac{89}{18144}\tu^3_1 \tu^3_9+\frac{1}{1296}\tu^3_{10} \tu^3 \right)\eps^{10} + \frac{41}{393120} \tu^3_{12}\eps^{12}
\end{split}
\end{align}

\subsection{Explicit first Poisson structure}

Finally, we compute the components of the first Poisson structure given by, for two local functionals $\overline{f},\overline{g}\in \hLambda$ and their variational differentials $X = \frac{\delta \overline{f}}{\delta \tilde{L}}, Y = \frac{\delta \overline{g}}{\delta \tilde{L}}$,
\begin{align}\label{eqFirstPoissonBracket}
\{ \overline{f},\overline{g}\}_1 &= \frac{1}{\eps}\int \mathrm{res} \, X \left[ (\partial_x Y_+\tilde{L})_- - (\tilde{L} Y_+\partial_x)_- - (\partial_x Y_-\tilde{L})_+ + (\tilde{L} Y_-\partial_x)_+ \right] dx.
\end{align}
Working with normal coordinate $\tu^*_*$, we aim to express this bracket through an Hamiltonian operator $K_\tu^{\alpha\beta} = \sum_{k\geq 0} K^{\alpha\beta}_k \d_x^k$, where $ K^{\alpha\beta}_k \in \hcA_\tu^{[-k+1]}$, via
\begin{align*}
\{ \overline{f}, \overline{g} \}_1 = \int \frac{\delta \overline{f}}{\delta \tu^\alpha} K_\tu^{\alpha\beta} \frac{\delta \overline{g}}{\delta \tu^\beta}.
\end{align*}
Under a Miura transformation $\tu^*_* = \tu^*_*(u^*_*,\eps)$, the Hamiltonian operator transforms accoring to $K_{\widetilde u}^{\alpha\beta} = (L^*)^\alpha_\mu \circ K_u^{\mu\nu} \circ L^\beta_\nu$,
where $(L^*)^\alpha_\mu = \sum\limits_{s\geq 0} \frac{\partial \widetilde u^\alpha}{\partial u^\mu_s} \partial_x^s$, $L^\beta_\nu = \sum\limits_{s\geq 0} (-\partial_x)^s \circ\frac{\partial \widetilde u^\beta}{\partial u^\nu_s}$. When clear from the context which formal variables are being used, as usual, we will drop the corresponding index in $K^{\alpha\beta}_u$ or $K^{\alpha\beta}_\tu$ and simply write $K^{\alpha\beta}$.
We use Dirac's notation of the components $K^{\alpha\beta}$ of the bracket and rewrite the above expression in the equivalent form
\begin{align}
\{ \overline{f},\overline{g} \}_1 = \iint \frac{\delta \overline{f}(x)}{\delta \tu^\alpha(x)} \{\tu^\alpha(x),\tu^\beta(y)\}_1 \frac{\delta \overline{g}(y)}{\delta \tu^\beta(y)} \mathrm{d}x\mathrm{d}y,
\end{align}
where
\begin{align}\label{eqPoissonCompGen}
\{\tu^\alpha(x),\tu^\beta(y)\}_1 := \sum_{k \geq 0} K^{\alpha\beta}_k(x) \delta^{(k)}(x-y).
\end{align}
In Equation \eqref{eqPoissonCompGen}, the expression ``$\{\tu^\alpha(x),\tu^\beta(y)\}_1$'' really is only a notation and does not mean that we evaluate the bracket on the pair of functions $(\tu^\alpha(x),\tu^\beta(y))$. However, one can evaluate the bracket on the pair of functions $(\tu^\alpha(z)\delta(x-z),\tu^\beta(z)\delta(y-z))$ (sometimes called the \textit{coordinate functionals} in physics literature) and find that, almost tautologically,
\begin{align*}
\{\tu^\alpha(z)\delta(x-z), \tu^\beta\delta(y-z)\}_1 &= \iint \frac{\delta \tu^\alpha(z)\delta(x-z)}{\delta \tu^\mu(z)} \sum_{k\geq 0} K^{\mu\nu}_k \delta^{(k)}(z-z') \frac{\delta \tu^\alpha(z')\delta(x-z')}{\delta \tu^\mu(z')} \mathrm{d}z\mathrm{d}z'\\
% &= \iint \delta^\alpha_\mu \delta(x-z) \sum_{k\geq 0} K^{\mu\nu}_k \delta^{(k)}(z-z') \delta^\beta_\nu \delta(y-z') \mathrm{d}z\mathrm{d}z'\\
 &= \sum_{k \geq 0} K^{\alpha\beta}_k(x) \delta^{(k)}(x-y).
\end{align*}
Hence,
\begin{align*}
\{\tu^\alpha(x),\tu^\beta(y)\}_1 = \{\tu^\alpha(z)\delta(x-z), \tu^\beta\delta(y-z)\}_1,
\end{align*}
where in the above equation, the left-hand side is purely notational, while the right-hand side can actually be computed.

Now the Poisson bracket \eqref{eqFirstPoissonBracket} is given in terms of the variational differentials, so that we first need to compute those of the coordinate functionals. To do so, we need to transform the normal coordinates $\tu^\alpha$ into the coordinates $v^\alpha$ of Equation \eqref{eqCoordinatesV}. The coordinates $\tilde{v}^\alpha$ are related to the $v^\alpha$'s by the condition $\tilde{L}^* + L = 0$; they read
\begin{align}\label{eqVTildeToV}
\left\{ \begin{aligned}
\tilde{v}^1 &= \ts \frac{1}{2\sqrt{2}}v^1_1\eps - \frac{1}{8\sqrt{2}}v^2_3\eps^3 + \frac{1}{8\sqrt{2}}v^3_5\eps^5,\\
\tilde{v}^2 &= \ts \frac{3}{2\sqrt{2}}v^2_1\eps - \frac{5}{4\sqrt{2}}v^3_3\eps^3,\\
\tilde{v}^3 &= \ts \frac{5}{2\sqrt{2}}v^3_1\eps.
\end{aligned} \right.
\end{align}
Then by identifying the two operators in Equation \eqref{eqFirstPoissonBracket} and inverting Equation \eqref{eqMiuraTransDStoNormal}, we find the following Miura transformation from the normal coordinates $\tu^\alpha$ to the coordinates $v^\alpha$:
\begin{align*}
\left\{ \begin{aligned}
\tu^1 &= \ts v^1 - \frac{1}{6} v^2v^3 + \frac{7}{216}(v^3)^3 + \left( \frac{1}{12}(v^3_1)^2 - \frac{5}{12}v^2_2 + \frac{11}{36}v^3v^3_2 \right)\eps^2 + \frac{89}{90}v^3_4\eps^4,\\
\tu^2 &= \ts v^2 - \frac{1}{4}(v^3)^2 - \frac{3}{2}v^3_2\eps^2,\\
\tu^3 &= \ts v^3,\\
\tu^4 &= \ts 2\sqrt{v^4}.
\end{aligned} \right.
\end{align*}
Now recall that the variational differentials are given by
\begin{align*}
\frac{\delta F(z)}{\delta \tilde{L}(z)} = \frac{\delta F(z)}{\delta v^4(z)} + \frac{1}{2} \sum_{\mu = 1}^3 \left( \frac{\delta F(z)}{\delta v^\mu(z)} \partial^{-2\mu}_z + \partial^{-2\mu}_z\frac{\delta F(z)}{\delta v^\mu(z)} \right) \in \Dcl^-.
\end{align*}
It follows that the variational differentials of the coordinates functionals are given by, for $\mu\in\{1,2,3\}$,
\begin{align}
\frac{\delta\, \tu^\mu(z)\delta(x-z)}{\delta \tilde{L}(z)} &= \frac{1}{2}\sum_{\nu=1}^3 \left( \frac{\delta \tu^\mu(z) \delta(x-z)}{\delta v^\nu(z)} \partial_z^{-2\nu} + \partial_z^{-2\nu} \frac{\delta \tu^\mu(z) \delta(x-z)}{\delta v^\nu(z)} \right),\label{eqVarDiffua}\\
\frac{\delta\, \tu^4(z)\delta(x-z)}{\delta \tilde{L}(z)} &= \frac{\delta \left( 2\sqrt{v^4(z)}\delta(x-z) \right)}{\delta v^4(z)} = \frac{2}{u^4(z)}\delta(x-z). \label{eqVarDiffu4}
\end{align}
Let us denote, for $\alpha\in\{1,2,3,4\}$, 
\begin{align*}
X^\alpha(z) := \frac{\delta\, \tu^\alpha(z)\delta({x}-z)}{\delta \tilde{L}(z)}, \hspace{1cm} Y^\alpha(z) := \frac{\delta\, \tu^\alpha(z)\delta({y}-z)}{\delta \tilde{L}(z)}
\end{align*}
As we can see in Equations \eqref{eqVarDiffua} and \eqref{eqVarDiffu4}, for any $\mu\in\{1,2,3\}$, we have $Y^\mu(z)_+ = 0$, and $Y^4(z)_- = 0$. It follows that
\begin{align*}
\{\tu^4(x),\tu^4(y)\}_1 =\ & 2\delta'(x-y)
\end{align*}
(Note that this computation does not depend on the integer $n$ of $D_n$, meaning that similarly, in the $D_n$ case, $\{\tu^n(x),\tu^n(y)\}_1 = 2\delta'(x-y)$.) It is easy to see that for any $\mu\in \{1,2,3\}$, $\{\tu^\mu(x),\tu^4(y)\}_1 = 0$. The remaining components are computed in a similar fashion, this time using
\begin{align*}
\{\tu^\mu(x),\tu^\nu(y)\}_1 =\ & \frac{1}{\eps} \displaystyle \int \ts \res\, X^\mu(z) \left[ \left( \tilde{L}(z) Y^\nu(z) \partial_z \right)_+ - \left( \partial_z Y^\nu(z) \tilde{L}(z) \right)_+ \right] \mathrm{d}z,
\end{align*}
for $\mu,\nu\in\{1,2,3\}$. Now we equivalently write $\partial_x^k$ instead of $\delta^{(k)}(x-y)$, for their action is identical. We write down the components $\{\tu^\alpha(x),\tu^\beta(y)\}_1$ in the following matrix:
\begin{equation}\label{eqPoissonMatrix}
K^{\mathrm{DS}}_{\tu} = \ts \begin{pmatrix}
\left(\frac{1}{6}\tu^3_2\partial_x + \frac{1}{2}\tu^3_1\partial_x^2 + \frac{1}{3}\tu^3\partial_x^3 \right)\eps^2 + \frac{4}{15}\partial_x^5\eps^4 & 0 & 6\partial_x & 0\\
0 & 6\partial_x & 0 & 0\\
6\partial_x & 0 & 0 & 0\\
0 & 0 & 0 & 2\partial_x
\end{pmatrix} = \eta\partial_x + \mathcal{O}(\eps)
\end{equation}

\section{Classical double ramification hierarchy for the $D_4$ Dubrovin-Saito CohFT} \label{section:DRclassical}

In this section we compute the double ramification hierarchy for the $D_4$ Dubrovin-Saito or Fan-Jarvis-Ruan-Witten cohomological field theory (the latter with respect to the non-maximal diagonal symmetry group $\<J\> = \mbZ/3\mbZ$) in the framework of homogeneous integrable systems of double ramification type. This means that we will find the unique dispersive deformation of double ramification type, according to \cite{BDGR16b}, of the principal hierarchy associated to the Frobenius manifold for the $D_4$ Coxeter group \cite{Dub99}, compatible with the natural grading of this homogeneous CohFT.\\

Next we will show that this $D_4$ DR hierarchy coincides with the $D_4$ Drinfeld-Sokolov hierarchy described in the previous section. Indeed, in \cite{GM05,FGM10} (see also \cite{LRZ15}) the authors proved that the $D_4$ Drinfeld-Sokolov hierarchy is the Dubrovin-Zhang hierarchy of the aforementioned $D_4$ CohFT and, as proven in \cite{BG16}, for a hierarchy of DR type to coincide with a DZ hierarchy with the same dispersionless (i.e. genus $0$) limit, it is enough that their Hamiltonian operators and Hamiltonians $\oh_{1,1}$ coincide, which is what we will prove.\\

\subsection{Double ramification hierarchy} We will not spell here the full definition of the classical double ramification hierarchy and its quantization, referring the reader to the papers \cite{Bur15, BR16a, BR16b, BDGR18}, or the review \cite{Ros17}, instead.\\

The construction has, as input, a cohomological field theory $c_{g,n}:V^{\otimes n}\to H^*(\oM_{g,n},\mbC)$ on the moduli space of stable curves $\oM_{g,n}$, with metric $\eta$ on the vector space $V$ and unit $e_1\in V$ (see \cite{KM94} for the definition) and, from its intersection theory, produces an integrable system of tau-symmetric Hamiltonian PDEs with Hamiltonian densities $g_{\alpha,d} \in \hcA_u$, $1\leq\alpha\leq N$, $d\geq -1$ where only even powers of $\eps$ appear, and Hamiltonian operator $K_u^{\alpha\beta} = \eta^{\alpha\beta} \d_x$.\\

The densities $g_{\alpha,d}$ satisfy the property $\frac{\d g_{\alpha,d+1}}{\d u^1} = g_{\alpha,d}$ with, in particular, $g_{\alpha,-1} = \eta_{\alpha\mu} u^\mu$. As remarked in \cite{BDGR16b}, an integrable system with this property and the above Poisson structure automatically possesses a tau-structure given by $h_{\alpha,d} = \frac{\delta \og_{\alpha,d+1}}{\delta u^1}$, with normal coordinates given by $\tu^\alpha = \eta^{\alpha\mu}\frac{\delta \og_{\mu,0}}{\delta u^1} $, $1\leq \alpha\leq N$.\\

In case the vector space $V$ is graded, with $\deg e_1 = 0$, the cohomological field theory is homogeneous if there exists $\delta\in \mbC$ such that the maps  $c_{g,n}:V^{\otimes n}\to H^*(\oM_{g,n},\mbC)$ and $\eta:V^{\otimes 2} \to \mbC$ have degree
$$\deg c_{g,n} = \delta (g-1), \qquad \deg \eta =-\delta,\qquad \delta \in \mbC, \quad 1\leq\alpha\leq\dim V.$$
In this case the formal variables $u^\alpha_k$ and $\eps$ acquire a grading too and the Hamiltonian densities of the resulting DR hierarchy are homogeneous:
$$|u^\alpha_k| = 1- \deg e_\alpha,\qquad | \eps| = \frac{1-\delta}{2},\qquad  |g_{\alpha,d}| = d+3-\delta-|u^\alpha|. $$
This grading $|\cdot|$ is not related to the differential grading $\deg (\cdot)$ on $\hcA$ and $\hLambda$ introduced in section \ref{section:positive flows}.\\

The DR hierarchy coincides with the Dubrovin-Zhang hierarchy \cite{DZ05,BPS12a,BPS12b} in genus $0$ (i.e. its dispersionless limit $\eps\to 0$ is the principal hierarchy of the corresponding Frobenius manifold, see \cite{DZ05}) and the two hierarchies are conjectured to be equivalent up to a Miura transformation preserving the tau-structure. This is the strong DR/DZ equivalence conjecture of \cite{BDGR18}, which has been proven for several CohFTs but is open in general. In this paper we will prove the conjecture for the $D_4$ Dubrovin-Saito CohFT.

\subsection{Integrable systems of DR type}
Let $\{\cdot,\cdot\}$ denote the Poisson structure associated to the Hamiltonian operator $\eta^{\mu \nu} \d_x$. For a local functional $\oh\in\hLambda^{[0]}_N$ consider the operator $\cD_{\oh}:\hcA[[z]]\to\hcA[[z]]$ defined by
\begin{equation*}
 \cD_{\oh}= \d_x \circ (D-1) - z\{ \cdot ,\oh\}, \qquad D:=\sum_{k\geq0}u^\alpha_k \frac{\d}{\d u^\alpha_k} + \eps \frac{\d}{\d \eps}.
\end{equation*}
Suppose there exist $N=\dim V$ solutions $g_{\alpha}(z) \in \hcA^{[0]}[[z]]$, $\alpha=1,\ldots,N$, to $\cD_{\oh} g_\alpha(z)=0$ with initial conditions $g_{\alpha}(z=0)=\eta_{\alpha\mu}u^\mu$. Then a new vector of solutions in the same class can be found by the following transformation
\begin{equation}\label{eq:solution transf classical}
g_{\alpha}(z) \mapsto a^\mu_\alpha(z) g_\mu(z) + b_\alpha(z),
\end{equation}
where $a^\mu_\alpha(z)=\delta^\mu_\alpha + \sum_{i > 0} a^\mu_{\alpha,i} z^i \in \mbC[[z]]$ and $b_\alpha(z)=\sum_{i>0} b_{\alpha,i} z^i \in \mbC[[z]]$.\\

The following result from \cite{BDGR16b} will constitute the main technical tool in our computation the DR hierarchy of the $D_4$ CohFT.\\
\begin{theorem}[\cite{BDGR16b}]\label{theorem:recursion->integrability classical}
Assume that $\oh \in \hLambda^{[0]}$ has the following properties:
\begin{itemize}
\item[(a)] there exist $N$ independent solutions $g_{\alpha}(z) = \sum_{p\geq 0} g_{\alpha,p-1} z^p\in \hcA^{[0]}[[z]]$, $\alpha=1,\ldots,N$, to the equation
\begin{equation}\label{eq:operator recursion classical}
\cD_{\oh} g_\alpha(z) = 0
\end{equation}
with the initial conditions $ g_{\alpha}(z=0)=\eta_{\alpha\mu}u^\mu$,
\item[(b)] $\displaystyle \frac{\delta \oh}{\delta u^1} = \frac{1}{2}\eta_{\mu \nu } u^\mu u^\nu + \d_x^2 r, \qquad r\in \hcA^{[-2]}$.
\end{itemize}
Then, up to a transformation of type (\ref{eq:solution transf classical}), we have
\begin{itemize}
\item[(i)] $\displaystyle g_{1,0} =  \frac{1}{2}\eta_{\mu \nu } u^\mu u^\nu + \d_x^2 (D-1)^{-1} r$,
\item[(ii)] $\og_{1,1} = \oh$,
\item[(iii)] $\displaystyle \{\og_{\alpha,p},\og_{\beta,q}\} = 0, \qquad \alpha,\beta=1,\ldots,N,\quad p,q\geq -1$,
\item[(iv)] $\displaystyle \{g_{\alpha,p},\og_{\beta,0}\} = \d_x \frac{\d g_{\alpha,p+1}}{\d u^\beta}, \qquad \beta=1,\ldots,N, \quad p\geq -1,$ 
\item[(v)] $\displaystyle \frac{\d g_{\alpha,p}}{\d u^1} = g_{\alpha,p-1}, \qquad \alpha=1,\ldots,N, \quad p\geq -1$,
\end{itemize}
hence in particular $\oh$ is part of an integrable tau-symmetric hierarchy.
\end{theorem}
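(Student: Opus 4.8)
\textit{Plan.} The plan is to read off a recursion from the linear equation $\cD_{\oh}g_\alpha(z)=0$ and then deduce (i)--(v) in the order (i), (v), (ii), (iv), (iii), the commutativity statement (iii) being the real work. Writing $g_\alpha(z)=\sum_{p\geq0}g_{\alpha,p-1}z^p$ and extracting the coefficient of $z^{p+1}$ in $\cD_{\oh}g_\alpha(z)=0$ gives
\begin{equation*}
\d_x(D-1)g_{\alpha,p}=\{g_{\alpha,p-1},\oh\},\qquad p\geq0,
\end{equation*}
the $z^0$-coefficient $\d_x(D-1)g_{\alpha,-1}=0$ being automatic since $g_{\alpha,-1}=\eta_{\alpha\mu}u^\mu$ lies in $\ker(D-1)$. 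At each step $g_{\alpha,p}$ is determined by $g_{\alpha,p-1}$ only up to an element of $\ker\bigl(\d_x\circ(D-1)\bigr)$, and adding such elements at successive orders is precisely a transformation of the form \eqref{eq:solution transf classical}; we therefore fix a convenient normalisation of the $g_{\alpha,p}$ once and for all. We use freely that $D$ commutes with $\d_x$, that $D\circ\frac{\d}{\d u^1}=\frac{\d}{\d u^1}\circ(D-1)$ (and likewise with $\frac{\delta}{\delta u^1}$), and that $D-1$ is invertible on differential polynomials with no component in its kernel.

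\textit{Steps (i) and (v).} The $p=0$ case of the recursion reads $\d_x(D-1)g_{1,0}=\{\eta_{1\mu}u^\mu,\oh\}=\d_x\frac{\delta\oh}{\delta u^1}$; hypothesis (b), the fact that $\tfrac12\eta_{\mu\nu}u^\mu u^\nu$ is a $(D-1)$-eigenvector of eigenvalue $1$, and the commutation of $(D-1)^{-1}$ with $\d_x$ then yield (i) once the $\ker\d_x$-term is absorbed into the normalisation. For (v) one differentiates the recursion with respect to $u^1$: using the commutation identities above together with the fact — a consequence of (i) — that the Hamiltonian flow of $\og_{1,0}$ is simply $\d_x$ (equivalently $\{f,\og_{1,0}\}=\d_x f$ for every density $f$), one checks that $\frac{\d g_{\alpha,p}}{\d u^1}$ satisfies the same recursion as $g_{\alpha,p-1}$ with the same initial datum $\eta_{\alpha\mu}u^\mu$, hence coincides with it.

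\textit{Steps (ii) and (iv).} Integrating the $p=1$ case of the recursion over $x$ gives $\{\og_{1,0},\oh\}=0$; more precisely, computing $\{g_{1,0},\oh\}$ from the explicit form (i) of $g_{1,0}$ and from (b), and using $\frac{\delta}{\delta u^1}\circ\d_x=0$ to track total $x$-derivatives, one identifies $g_{1,1}$ with a density of $\oh$ modulo $\mathrm{Im}(\d_x)$, i.e. $\og_{1,1}=\oh$ after normalisation, which is (ii). For (iv) one inducts on $p$: the base case $p=-1$ is the identity $\{\eta_{\alpha\mu}u^\mu,\og_{\beta,0}\}=\d_x\frac{\delta\og_{\beta,0}}{\delta u^\alpha}=\d_x\frac{\d g_{\alpha,0}}{\d u^\beta}$, the last equality following from the explicit form of $g_{\alpha,0}$ and a standard identity for the second variational derivative of $\oh$; the inductive step combines the recursion for $g_{\alpha,p+1}$, property (v), and the Jacobi identity applied to the flows of $\og_{\beta,0}$ and of $\oh$ (using $\{\og_{\beta,0},\oh\}=0$, just established).

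\textit{Step (iii), the main obstacle.} Integrating the recursion gives $\{\og_{\alpha,p},\oh\}=0$ for all $\alpha,p$, and integrating (iv) gives $\{\og_{\alpha,p},\og_{\beta,0}\}=0$ for all $\alpha,\beta,p$; the Jacobi identity then shows that $\{\og_{\alpha,p},\og_{\beta,q}\}$ is conserved by the $\oh$-flow for all indices. The real difficulty is to upgrade ``conserved by the $\oh$-flow'' to ``zero''. For this one passes to the dispersionless limit $\eps\to0$: by (b) and (i) the leading parts $\og_{\alpha,p}^{[0]}$ are the Hamiltonians of the principal hierarchy of the Frobenius manifold with flat metric $\eta$ encoded by the genus-$0$ recursion, which Poisson-commute by the theory of \cite{DZ05}, so $\{\og_{\alpha,p},\og_{\beta,q}\}=O(\eps^2)$; one then argues by induction on the order in $\eps$, propagating the vanishing of each dispersive component from level $p-1$ to level $p$ via the recursion and using the homogeneity of all objects under the grading $|\cdot|$ to keep the spaces of candidate correction terms finite-dimensional, so that no spurious conserved density can obstruct the vanishing. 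This yields (iii); combining (i)--(v) with the tau-structure $h_{\alpha,d}=\frac{\delta\og_{\alpha,d+1}}{\delta u^1}$ of \cite{BDGR16b}, $\oh$ is part of an integrable tau-symmetric hierarchy.
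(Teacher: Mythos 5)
First, note that the paper does not prove this theorem at all: it is imported verbatim from \cite{BDGR16b}, and the proof there is a purely formal, algebraic induction on the recursion $\d_x(D-1)g_{\alpha,p+1}=\{g_{\alpha,p},\oh\}$ extracted from \eqref{eq:operator recursion classical}, using only hypotheses (a), (b), the freedom \eqref{eq:solution transf classical}, and standard variational identities (string-type exchange of $\d/\d u^1$ with the bracket, the Euler/dilaton identity, control of $\ker(\d_x\circ(D-1))$); no Frobenius-manifold or CohFT input enters. Your skeleton — extract the recursion, fix the normalisation, prove (i), then (v), (ii), (iv), (iii) — is compatible with that, and your step (i) is essentially correct.

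The genuine gap is in (iii), which you rightly identify as the main obstacle but do not actually close. Two distinct problems: first, hypotheses (a) and (b) supply no Frobenius manifold — no potential satisfying WDVV is assumed — so you cannot appeal to the theory of principal hierarchies of \cite{DZ05} to get commutativity of the dispersionless limits $\og^{[0]}_{\alpha,p}$; genus-$0$ commutativity is itself part of what must be deduced from (a), (b). Second, the grading $|\cdot|$ you invoke to ``keep the spaces of candidate correction terms finite-dimensional'' exists only for homogeneous CohFTs and is not among the hypotheses of the theorem; and even granting some grading, ``$\{\og_{\alpha,p},\og_{\beta,q}\}$ is conserved by the $\oh$-flow'' plus degree bookkeeping does not force a local functional to vanish — you state no vanishing lemma and exhibit no mechanism, so the decisive upgrade from ``conserved'' to ``zero'' is missing. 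What is needed (and what the cited proof supplies) is a density-level statement generalising (iv) to arbitrary $q$, proved inductively via the Jacobi identity, the relation $D\{\bar f,\bar g\}=\{D\bar f,\bar g\}+\{\bar f,D\bar g\}-2\{\bar f,\bar g\}$, and the recursion, which then integrates to (iii). Two secondary gaps of the same nature: in (v) and (iv), exchanging $\d/\d u^1$ with $\{\cdot,\oh\}$ produces a second-variational-derivative term whose identification with $\d_x g_{\alpha,p}$ by means of (b) is precisely the content of the step and is omitted; and the argument ``$\frac{\d g_{\alpha,p}}{\d u^1}$ satisfies the same recursion with the same initial datum, hence coincides'' is not valid as stated, because $\d_x\circ(D-1)$ has a nontrivial kernel at every order — one must check that the discrepancy lies in the ambiguity absorbed by a transformation of type \eqref{eq:solution transf classical}.
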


We call a system of densities originating from an Hamiltonian $\oh = \og_{1,1}$ as in the theorem above an \emph{integrable system of double ramification type}. As proved in \cite{BDGR16b}, the DR hierarchy of any given cohomological field theory, is always and integrable system of DR type. This fact implies in particular that the entire hierarchy of DR Hamiltonian densities can be reconstructed from $\og_{1,1}\in \hLambda^{[0]}$ alone by means of the DR recursion equation \cite{BR16a}
\begin{equation}\label{eq:DRrecursion}
\d_x (D-1) g_{\alpha,d+1} = \{g_{\alpha,d},\og_{1,1}\}, \qquad g_{\alpha,-1}=\eta_{\alpha\mu}u^\mu, \qquad 1\leq\alpha\leq N, d\geq -1.
\end{equation}
The above result can sometimes be used to effectively compute the Hamiltonian $\og_{1,1}$ itself, starting from a limited amount of information on the CohFT, as we will show in the next section.

\subsection{$D_4$ Dubrovin-Saito CohFT}\label{section:DSCohFT}
In \cite{Dub99}, based on the work of K. Saito \cite{Sai81,Sai83a,Sai83b}, Dubrovin constructs a structure of Frobenius manifold \cite{Dub96} on the space of orbits of any finite irreducible Coxeter group. This space is biholomorphic to the space of miniversal unfoldings of the corresponding simple hypersurface singularity and the Frobenius structure is given by the Milnor ring multiplicative structure and the residue pairing at the corresponding deformation. The resulting Frobenius manifold is generically semisimple and conformal, so Givental-Teleman \cite{Giv01,Tel12} (see also \cite{PPZ15}) theory can be applied to produce a uniquely defined homogeneous cohomological field theory.\\

We are interested in this construction for the case of the $D_4$ simple singularity, $W=x^3+xy^2$. The resulting CohFT $c^{D_4}_{g,n}$ has phase space $V=\langle e_1,\ldots,e_4\rangle$ with $\deg e_1=0$, $\deg e_2=\deg e_4=\frac{1}{3}$, $\deg e_3=\frac{2}{3}$,  $\deg c^{D_4}_{g,n} = \delta = \frac{2}{3}$. In genus $0$ the corresponding Frobenius structure can described by the Frobenius potential (see for instance \cite{LRZ15})
\small
\begin{equation}\label{eq:potential}
\begin{split}
F \ =\ & \frac{ t^1 \left(t^2\right)^2}{12}+\frac{\left(t^1\right)^2 t^3}{12} -\frac{ \left(t^2\right)^3 t^3}{216}+\frac{\left(t^2\right)^2 \left(t^3\right)^3}{1296}+\frac{\left(t^3\right)^7}{1632960}+\frac{t^1 \left(t^4\right)^2}{4}
   +\frac{t^2 t^3 \left(t^4\right)^2}{24} 
   +\frac{\left(t^3\right)^3 \left(t^4\right)^2}{432},
\end{split}
\end{equation}
\normalsize
which, in particular, gives the metric
\begin{equation}
\eta = \ts \begin{pmatrix}
0 & 0 & \frac{1}{6} & 0\\
0 & \frac{1}{6} & 0 & 0\\
\frac{1}{6} & 0 & 0 & 0\\
0 & 0 & 0 & \frac{1}{2}
\end{pmatrix}
\end{equation}

Notice that this CohFT was proved to be isomorphic to the quantum singularity theory of Fan-Jarvis-Ruan-Witten  \cite{FJR07,FJR13} for the simple singularity  $W = x^3+xy^2$, with respect to the non-maximal diagonal symmetry group $\<J\> = \mbZ/3\mbZ$ \cite{FFJMR10}.

\subsection{$D_4$ double ramification hierarchy}

In \cite{GM05,FGM10} it was proved that the Dubrovin-Zhang hierarchy for the Dubrovin-Saito CohFT associated to a Coxeter group coincides with the Drinfeld-Sokolov hierarchy of the corresponding semisimple Lie algebra. As explained above, the $\eps\to 0$ limit of both the DZ and DR hierarchies of any (semisimple) CohFT coincides with the principal hierarchy of the Frobenius manifold which is, consequently, completely determined, thanks to the recursion equation (\ref{eq:DRrecursion}), by the Hamiltonian $\og^{[0]}_{1,1} = \og_{1,1}|_{\eps=0}$.\\

To compute the latter in the $D_4$ case we can use (see \cite{DZ05}) the dilaton equation $\og^{[0]}_{1,1}=\int (D-2)( F(t^*)|_{t^* = u^*}) dx$, to obtain

\begin{equation}\label{eq:dless}
\begin{split}
\og^{[0]}_{1,1}=  \int  &\left[\frac{u^1 \left(u^2\right)^2}{12} +\frac{\left(u^1\right)^2 u^3}{12} -\frac{\left(u^2\right)^3 u^3}{108} +\frac{ \left(u^2\right)^2
   \left(u^3\right)^3}{432}+\frac{\left(u^3\right)^7}{326592}+\frac{u^1 \left(u^4\right)^2 }{4} \right. \\
   & \left.+\frac{ u^2 u^3 \left(u^4\right)^2}{12}+\frac{ \left(u^3\right)^3 \left(u^4\right)^2}{144}\right]dx.
\end{split}
\end{equation}
Moreover, thanks to homogeneity of the CohFT, we know that, for any $k\geq 0$,
\begin{equation}\label{eq:degrees}
|u^1_k|=1, \qquad |u^2_k|=|u^4_k|=\frac{2}{3},\qquad |u^3_k| = \frac{1}{3}, \qquad |\eps|= \frac{1}{6},\qquad |\og_{1,1}| = \frac{7}{3}.
\end{equation}
Notice here how all the variables have positive degree, guaranteeing the polynomiality of $\og_{1,1}$ and, in fact, of all the Hamiltonian densities $g_{\alpha,d}$, $1\leq\alpha\leq N$, $d\geq -1$.

\begin{theorem}\label{thm:classicalD4}
The double ramification hierarchy for the $D_4$ Dubrovin-Saito cohomological field theory is the integrable hierarchy with Hamiltonian operator $$K_u^\DR = \begin{pmatrix}
0 & 0 & 6\d_x & 0\\
0 & 6 \d_x & 0 & 0\\
6 \d_x & 0 & 0 & 0\\
0 & 0 & 0 & 2 \d_x
\end{pmatrix}$$ and Hamiltonian densities defined by equation (\ref{eq:DRrecursion}), where
%\scriptsize
\begin{equation}\label{eq:DRHam}
\begin{split}
\og_{1,1} = \smash{\int} & \textstyle \left[ \left(\frac{1}{12} u^1 \left(u^2\right)^2+\frac{1}{12} \left(u^1\right)^2 u^3-\frac{1}{108} \left(u^2\right)^3 u^3+\frac{1}{432} \left(u^2\right)^2 \left(u^3\right)^3+\frac{1}{326592}\left(u^3\right)^7+\frac{1}{4} u^1 \left(u^4\right)^2 \right.\right. \\
   & \textstyle \left.\left.+\frac{1}{12} u^2 u^3 \left(u^4\right)^2+\frac{1}{144} \left(u^3\right)^3 \left(u^4\right)^2\right)+\left(-\frac{1}{6}
   \left(u_1^1\right)^2+\frac{1}{24} u^2 \left(u_1^2\right)^2-\frac{1}{72} \left(u^3\right)^2 \left(u_1^2\right)^2 \right.\right. \\
   & \textstyle \left.\left.-\frac{1}{108} \left(u^3\right)^2 u_1^1 u_1^3-\frac{1}{27} u^2 u^3 u_1^2
   u_1^3-\frac{1}{108} \left(u^2\right)^2 \left(u_1^3\right)^2-\frac{1}{2592}\left(u^3\right)^4 \left(u_1^3\right)^2-\frac{1}{36} \left(u^4\right)^2 \left(u_1^3\right)^2 \right.\right. \\
   & \textstyle \left.\left.-\frac{1}{4} u^4 u_1^2
   u_1^4-\frac{1}{9} u^3 u^4 u_1^3 u_1^4-\frac{1}{8} u^2 \left(u_1^4\right)^2-\frac{1}{24} \left(u^3\right)^2 \left(u_1^4\right)^2\right) \eps ^2+\left(-\frac{35}{46656} u^3 \left(u_1^3\right)^4 \right.\right. \\
   & \textstyle \left.\left.+\frac{1}{144} \left(u_1^3\right)^2 u_2^1+\frac{5}{216} u_1^2 u_1^3 u_2^2+\frac{1}{48} u^3 \left(u_2^2\right)^2+\frac{1}{54} u^3 u_2^1 u_2^3+\frac{5}{216} u^2 u_2^2
   u_2^3+\frac{7}{7776} \left(u^3\right)^3 \left(u_2^3\right)^2 \right.\right. \\
   & \textstyle \left.\left.+\frac{5}{72} u_1^3 u_1^4 u_2^4+\frac{5}{72} u^4 u_2^3 u_2^4+\frac{1}{16} u^3 \left(u_2^4\right)^2\right) \eps ^4+\left(\frac{1}{486}
   \left(u_1^3\right)^2 \left(u_2^3\right)^2+\frac{13}{5832} u^3 \left(u_2^3\right)^3-\frac{1}{112} \left(u_3^2\right)^2 \right.\right. \\
   & \textstyle \left.\left.-\frac{13}{1512} u_3^1 u_3^3-\frac{1}{972} \left(u^3\right)^2
   \left(u_3^3\right)^2-\frac{3}{112} \left(u_3^4\right)^2\right) \eps ^6+\left(-\frac{5}{1728} u_2^3 \left(u_3^3\right)^2+\frac{1}{1728}u^3 \left(u_4^3\right)^2\right) \eps
   ^8 \right. \\
   & \textstyle \left.-\frac{1}{7776}\left(u_5^3\right)^2 \eps ^{10}\right] dx
\end{split}
\end{equation}
\normalsize
\end{theorem}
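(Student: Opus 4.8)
The plan is to use the fact, recalled above, that the classical DR hierarchy is always an integrable system of double ramification type: by Theorem~\ref{theorem:recursion->integrability classical} it is completely reconstructed, through the recursion~\eqref{eq:DRrecursion}, from the single local functional $\og_{1,1}\in\hLambda^{[0]}$ together with its Hamiltonian operator, which for any DR hierarchy is the constant operator $\eta^{\alpha\beta}\d_x$. For the $D_4$ metric computed above this operator is precisely $K_u^\DR$, so the content of the theorem reduces to identifying $\og_{1,1}$ and checking that it equals~\eqref{eq:DRHam}.

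First I would fix the dispersionless part: $\og_{1,1}|_{\eps=0}$ is the genus-$0$ Hamiltonian of the principal hierarchy, obtained from the Frobenius potential~\eqref{eq:potential} by the dilaton equation, which gives~\eqref{eq:dless} and hence the $\eps^0$ block of~\eqref{eq:DRHam} (one checks en route that $\delta\og_{1,1}^{[0]}/\delta u^1=\tfrac12\eta_{\mu\nu}u^\mu u^\nu$, consistently with property (b)). For the dispersive corrections I would invoke homogeneity: the weights~\eqref{eq:degrees} of $u^1_k,u^2_k,u^3_k,u^4_k$ and $\eps$ are all strictly positive and $|\og_{1,1}|=\tfrac73$, and only even powers of $\eps$ occur, so the $\eps^{2j}$-coefficient of the density of $\og_{1,1}$ is a differential polynomial $P_{2j}\in\hcA_u$ with differential degree $\deg P_{2j}=2j$ and homogeneous weight $|P_{2j}|=(7-j)/3$. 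For each $j$ this space, taken modulo $\Im(\d_x)$ and constants, is finite-dimensional, and for $j\ge 6$ the $\eps^{2j}$-coefficient can only be a total $x$-derivative of $u^3$ or a constant, hence zero in $\hLambda$ — which is why~\eqref{eq:DRHam} terminates at $\eps^{10}$. Enumerating a monomial basis for $j=1,\dots,5$ produces a finite ansatz with unknown rational coefficients.

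To pin those coefficients down I would impose on the ansatz the two hypotheses of Theorem~\ref{theorem:recursion->integrability classical}. Property (b), $\delta\og_{1,1}/\delta u^1=\tfrac12\eta_{\mu\nu}u^\mu u^\nu+\d_x^2 r$ with $r\in\hcA^{[-2]}$, is a system of linear equations on the dispersive coefficients. Property (a) — existence of the four solutions $g_\alpha(z)$ of $\cD_{\og_{1,1}}g_\alpha(z)=0$ with $g_\alpha(0)=\eta_{\alpha\mu}u^\mu$ — unfolds, order by order in $z$, into exactly the requirement that the recursion~\eqref{eq:DRrecursion} closes, i.e. that each $\{g_{\alpha,d},\og_{1,1}\}$ lie in $\Im\!\big(\d_x(D-1)\big)$; the vanishing of these obstructions gives further linear equations, and by Theorem~\ref{theorem:recursion->integrability classical} their satisfaction already forces the whole hierarchy to be integrable and tau-symmetric. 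Running this for a handful of orders in $z$ and solving the resulting overdetermined linear system should yield~\eqref{eq:DRHam} as its unique solution. The uniqueness is genuine: the only freedom left by Theorem~\ref{theorem:recursion->integrability classical} is a transformation of type~\eqref{eq:solution transf classical}, which must respect the homogeneous grading (with $|z|=-1$), and a short weight count shows that for the small, positive weights of the $D_4$ variables such a transformation is forced to be the identity; hence $\og_{1,1}$ is determined by its dispersionless limit, its weight, and (a)+(b). As a consistency check — and as the bridge to the equivalence theorem of the next section — the $\og_{1,1}$ so obtained should coincide, after the substitution $\d_x^k\mapsto(\eps/\sqrt 2)^k\d_x^k$ and the identification of the two sets of normal coordinates, with the Drinfeld--Sokolov density $h_{1,1}$ of~\eqref{eq:DS11}.

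The main difficulty is computational, not conceptual: building the monomial basis at the $\eps^8$ and especially $\eps^{10}$ levels — where numerous differential monomials in $u^3$ appear — and carrying the recursion~\eqref{eq:DRrecursion} symbolically far enough, and with enough book-keeping, to determine every coefficient. The only theoretical point needing care is to be sure that (a), (b), homogeneity and the fixed genus-$0$ limit really leave no residual parameter; this holds here precisely because every formal variable has strictly positive weight, so the space of admissible dispersive deformations is finite-dimensional and, as the computation shows, reduces to a single point.
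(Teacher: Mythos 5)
Your strategy is the same as the paper's: fix the dispersionless part \eqref{eq:dless} via the dilaton equation, use the positive weights \eqref{eq:degrees} and $|\og_{1,1}|=\tfrac73$ to reduce to a finite polynomial ansatz terminating at $\eps^{10}$, and determine the coefficients by imposing conditions (a) and (b) of Theorem \ref{theorem:recursion->integrability classical}. However, there is one genuine gap, precisely at the point you flag as ``the only theoretical point needing care'': your claim that (a), (b), homogeneity and the fixed genus-$0$ limit leave \emph{no} residual parameter is false. The rescaling $\eps\mapsto\lambda\eps$ maps any admissible $\og_{1,1}$ to another local functional with the same dispersionless limit, the same weight $\tfrac73$ (the weight of a monomial does not change when its $\eps$-powers are multiplied by a constant), only even powers of $\eps$, and still satisfying (a) and (b), since the operator $\cD_{\og_{1,1}}$ and the bracket are invariant under this rescaling. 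So the ansatz-plus-conditions argument produces a one-parameter family, not a point, and your appeal to the freedom \eqref{eq:solution transf classical} does not address this: that freedom concerns the solutions $g_\alpha(z)$ for a \emph{fixed} Hamiltonian, not the normalization ambiguity of $\og_{1,1}$ itself.

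To close the gap you need one extra piece of geometric input tying the dispersive terms to the actual CohFT, and this is exactly what the paper uses: for any cohomological field theory one has $\Coef_{(u^1_1)^2\eps^2}\,\og_{1,1}=-\frac{\dim V}{24}$, which for $D_4$ gives the coefficient $-\tfrac16$ of $(u^1_1)^2\eps^2$ in \eqref{eq:DRHam} and thereby fixes the scale of $\eps$. Without such a normalization your argument only determines \eqref{eq:DRHam} up to $\eps\to\lambda\eps$, so the stated numerical coefficients (and the subsequent identification with the Drinfeld--Sokolov density $h_{1,1}$) would not follow. With this single correction, the rest of your plan — including the termination at $\eps^{10}$ and the linear-algebra determination of the remaining coefficients from (a) and (b) — matches the paper's proof.
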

\begin{proof}
As explained above we know that $g_{1,1}|{\eps=0} $ is given by equation (\ref{eq:dless}) and that, by definition, only even powers of $\eps$ appear in the Hamiltonian densities of any DR hierarchy.\\

Now remark from (\ref{eq:degrees}) that all the variables have positive degree, guaranteeing the polynomiality of $\og_{1,1}$ and, in fact, of all the Hamiltonian densities $g_{\alpha,d}$, $1\leq\alpha\leq N$, $d\geq -1$. In particular, since $|\eps|=\frac{1}{6}$ and in each monomials there are as many $x$-derivatives as powers of $\eps$, we see that any term where the power of $\eps$ is bigger than $10$ is either trivial or a total $x$ derivative.\\

One can then verify, by direct computation, that up to rescaling of the variable $\eps$, there exist a unique local functional $\og_{1,1}\in \hLambda^{[0]}$ such that: $g_{1,1}|_{\eps=0}$ is given by (\ref{eq:dless}), contains only even powers of $\eps$, $|\og_{1,1}|=\frac{7}{3}$ and $\og_{1,1}$ is of double ramification type. In practice one writes down the most general polynomial deformation of (\ref{eq:dless}) with the given degree and notices that imposing conditions (a) and (b) of theorem \ref{theorem:recursion->integrability classical} determines, up to rescalings of $\eps$, all the coefficients.\\

Finally, to determine the correct normalization of $\eps$, it is sufficient to compute the coefficient of a single monomial of $\og_{1,1}$ containing $\eps$. We can do this by recalling (see for instance \cite{BDGR18}) that, for any CohFT, $\Coef_{(u^1_1)^2 \eps^2} \og_{1,1}=-\frac{\dim V}{24}$.
\end{proof}
\vspace{0.4cm}
Recall that the DR hierarchy possesses a natural tau-structure given by the Hamiltonian densities $h_{\alpha,d} = \frac{\delta \og_{\alpha,d+1}}{\delta u^1}$. The normal coordinates for this tau structures are $\tu^\alpha =\eta^{\alpha\mu} \frac{\delta \og_{\mu,0}}{\delta u^1}$. Explicitly one obtains
\begin{equation*}\left\{
\begin{array}{l}
\tu^1 = u^1+\left(\frac{1}{36} \left(u_1^3\right)^2+\frac{1}{36} u^3 u_2^3\right) \eps ^2+\frac{1}{45} u_4^3 \eps ^4\\
\tu^2 = u^2\\
\tu^3 = u^3\\
\tu^4 = u^4
\end{array}\right.
\end{equation*}
Applying this change of coordinates one obtains the Hamiltonian operator
\small
\begin{equation*}K_\tu^\DR=
\left(
\begin{array}{cccc}
 \eps ^2\left(\frac{1}{3} \tu^3 \d_x^3+\frac{1}{2} \tu_1^3  \d_x^2+\frac{1}{6} \tu_2^3  \d_x\right) +\eps ^4 \frac{4}{15} \d_x^5 & 0 & 6 \d_x & 0 \\
 0 & 6 \d_x & 0 & 0 \\
 6 \d_x & 0 & 0 & 0 \\
 0 & 0 & 0 & 2 \d_x \\
\end{array}
\right),
\end{equation*}
\normalsize
which coincides with (\ref{eqPoissonMatrix}), and the Hamiltonian
%\scriptsize
\begin{equation*}
\begin{split}
\og_{1,1}= \smash{\int} & \textstyle \left[\left(\frac{1}{12} \tu^1 \left(\tu^2\right)^2+\frac{1}{12} \left(\tu^1\right)^2 \tu^3-\frac{1}{108} \left(\tu^2\right)^3 \tu^3+\frac{1}{432} \left(\tu^2\right)^2 \left(\tu^3\right)^3+\frac{1}{326592}\left(\tu^3\right)^7+\frac{1}{4} \tu^1
   \left(\tu^4\right)^2\right.\right.\\
   & \textstyle\left.\left.+\frac{1}{12} \tu^2 \tu^3 \left(\tu^4\right)^2+\frac{1}{144} \left(\tu^3\right)^3 \left(\tu^4\right)^2\right)+\left(-\frac{1}{6} \left(\tu_1^1\right){}^2+\frac{1}{24} \tu^2 \left(\tu_1^2\right){}^2-\frac{1}{72} \left(\tu^3\right)^2
   \left(\tu_1^2\right){}^2\right.\right.\\
   & \textstyle\left.\left.-\frac{1}{108} \left(\tu^3\right)^2 \tu_1^1 \tu_1^3-\frac{1}{27} \tu^2 \tu^3 \tu_1^2 \tu_1^3-\frac{5}{432} \left(\tu^2\right)^2 \left(\tu_1^3\right){}^2-\frac{1}{216} \tu^1 \tu^3 \left(\tu_1^3\right){}^2-\frac{1}{2592}\left(\tu^3\right)^4 \left(\tu_1^3\right){}^2\right.\right.\\
   & \textstyle\left.\left.-\frac{5}{144} \left(\tu^4\right)^2 \left(\tu_1^3\right){}^2-\frac{1}{4} \tu^4 \tu_1^2 \tu_1^4-\frac{1}{9} \tu^3 \tu^4 \tu_1^3 \tu_1^4-\frac{1}{8} \tu^2 \left(\tu_1^4\right){}^2-\frac{1}{24} \left(\tu^3\right)^2
   \left(\tu_1^4\right){}^2\right.\right.\\
   & \textstyle\left.\left.-\frac{1}{432} \left(\tu^2\right)^2 \tu^3 \tu_2^3-\frac{1}{216} \tu^1 \left(\tu^3\right)^2 \tu_2^3-\frac{1}{144} \tu^3 \left(\tu^4\right)^2 \tu_2^3\right) \epsilon ^2+\left(-\frac{1}{1458}\tu^3 \left(\tu_1^3\right){}^4+\frac{1}{144}
   \left(\tu_1^3\right){}^2 \tu_2^1\right.\right.\\
   & \textstyle\left.\left.+\frac{5}{216} \tu_1^2 \tu_1^3 \tu_2^2+\frac{1}{48} \tu^3 \left(\tu_2^2\right){}^2+\frac{1}{36} \tu_1^1 \tu_1^3 \tu_2^3+\frac{7}{7776} \left(\tu^3\right)^2 \left(\tu_1^3\right){}^2 \tu_2^3+\frac{1}{54} \tu^3 \tu_2^1
   \tu_2^3+\frac{5}{216} \tu^2 \tu_2^2 \tu_2^3+\right.\right.\\
   & \textstyle\left.\left.\frac{5}{5184} \left(\tu^3\right)^3 \left(\tu_2^3\right){}^2+\frac{5}{72} \tu_1^3 \tu_1^4 \tu_2^4+\frac{5}{72} \tu^4 \tu_2^3 \tu_2^4+\frac{1}{16} \tu^3 \left(\tu_2^4\right){}^2+\frac{1}{108} \tu^3 \tu_1^1
   \tu_3^3+\frac{1}{3888}\left(\tu^3\right)^3 \tu_1^3 \tu_3^3\right.\right.\\
   & \textstyle\left.\left.-\frac{1}{540} \left(\tu^2\right)^2 \tu_4^3-\frac{1}{270} \tu^1 \tu^3 \tu_4^3-\frac{1}{180} \left(\tu^4\right)^2 \tu_4^3\right) \epsilon ^4+\left(\frac{5}{15552} \left(\tu_1^3\right){}^2\left(\tu_2^3\right){}^2+\frac{1}{1458}\tu^3 \left(\tu_2^3\right){}^3\right.\right.\\
   & \textstyle\left.\left.-\frac{1}{112} \left(\tu_3^2\right){}^2-\frac{1}{1296}\left(\tu_1^3\right){}^3 \tu_3^3-\frac{11}{3888} \tu^3 \tu_1^3 \tu_2^3 \tu_3^3-\frac{13}{1512} \tu_3^1 \tu_3^3-\frac{1}{864} \left(\tu^3\right)^2 \left(\tu_3^3\right){}^2-\frac{3}{112} \left(\tu_3^4\right){}^2\right.\right.\\
   & \textstyle\left.\left.-\frac{7}{77760} \tu^3 \left(\tu_1^3\right){}^2 \tu_4^3-\frac{1}{2430}\left(\tu^3\right)^2 \tu_2^3 \tu_4^3+\frac{1}{135} \tu_1^1 \tu_5^3+\frac{1}{4860}\left(\tu^3\right)^2 \tu_1^3\tu_5^3\right) \epsilon ^6+\left(-\frac{55}{108864} \tu_2^3 \left(\tu_3^3\right){}^2\right.\right.\\
   & \textstyle\left.\left.+\frac{65}{54432} \tu_1^3 \tu_3^3 \tu_4^3+\frac{241}{388800} \tu^3 \left(\tu_4^3\right){}^2-\frac{1}{1620}\tu_1^3 \tu_2^3 \tu_5^3+\frac{1}{30240}\tu^3 \tu_3^3 \tu_5^3-\frac{1}{6480}\left(\tu_1^3\right){}^2 \tu_6^3\right.\right.\\
   & \textstyle\left.\left.-\frac{1}{2430}\tu^3 \tu_2^3 \tu_6^3\right) \epsilon ^8+\left(-\frac{41}{194400} \left(\tu_5^3\right){}^2+\frac{13}{68040} \tu_3^3 \tu_7^3\right) \epsilon ^{10}\right]dx
   \end{split}
   \end{equation*}
   \normalsize
which, up to a $\d_x$-exact term, agrees with (\ref{eq:DS11}).
\begin{corollary}
For the $D_4$ Dubrovin-Saito cohomological field theory, the double ramification hierarchy written in normal coordinates coincides with the Dubrovin-Zhang hierarchy which is the $D_4$ Drinfeld-Sokolov hierarchy.
\end{corollary}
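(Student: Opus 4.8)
The plan is to deduce the Corollary from Theorem~\ref{thm:classicalD4}, the criterion of \cite{BG16} for when an integrable system of double ramification type coincides with a Dubrovin--Zhang hierarchy having the same dispersionless limit, and the identification of the $D_4$ Dubrovin--Zhang hierarchy with the Drinfeld--Sokolov one from \cite{GM05,FGM10}.

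First I would record that, since the $D_4$ Dubrovin--Saito CohFT is generically semisimple and conformal, the dispersionless ($\eps\to 0$) limits of both its DR hierarchy and its DZ hierarchy are, by construction, the principal hierarchy of the $D_4$ Coxeter-group Frobenius manifold with potential~\eqref{eq:potential}; in particular their genus-$0$ parts agree and are encoded by the single Hamiltonian $\og^{[0]}_{1,1}$ of~\eqref{eq:dless}. By \cite{GM05,FGM10} the DZ hierarchy of this CohFT is exactly the $D_4$ Drinfeld--Sokolov hierarchy of Section~\ref{DSD4}, so it suffices to match the DR hierarchy of Theorem~\ref{thm:classicalD4} against that hierarchy.

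Next I would pass each hierarchy to its own normal coordinates and compare the two pieces of data that, by \cite{BG16}, determine the full tau-symmetric hierarchy once the dispersionless limit is fixed: the Hamiltonian (Poisson) operator and the Hamiltonian $\og_{1,1}$. On the DR side the normal coordinates $\tu^\alpha=\eta^{\alpha\mu}\frac{\delta\og_{\mu,0}}{\delta u^1}$ yield the explicit Miura transform and the operator $K_\tu^\DR$ displayed just after Theorem~\ref{thm:classicalD4}, while on the Drinfeld--Sokolov side the corresponding data are the Poisson operator $K^{\mathrm{DS}}_\tu$ of~\eqref{eqPoissonMatrix} and the density $h_{1,1}$ of~\eqref{eq:DS11}. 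I would then check by direct substitution, using the closed formulas already obtained, that $K_\tu^\DR=K^{\mathrm{DS}}_\tu$ and that $\og_{1,1}$ written in DR normal coordinates equals the Drinfeld--Sokolov $h_{1,1}$ modulo a term in $\mathrm{Im}(\d_x)$, that is, that the two local functionals in $\hLambda$ coincide.

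Finally, applying \cite{BG16}: the equality of the Hamiltonian operators together with the equality of the single Hamiltonian $\og_{1,1}$, given that the two hierarchies already share the same dispersionless limit, forces the equality of the full tau-symmetric hierarchies, so that in normal coordinates the DR and DZ hierarchies are literally the same (the comparison Miura transformation being the identity); combined with \cite{GM05,FGM10} this proves the statement. The main obstacle is purely computational: carrying out the Miura transformation to normal coordinates on the DR side and matching the resulting long polynomial for $\og_{1,1}$ against~\eqref{eq:DS11} up to $\d_x$-exact terms is routine but bookkeeping-heavy. The conceptual reduction, that a single Hamiltonian together with the Poisson operator suffices, is exactly what \cite{BG16} supplies, and the deeper structural input, that the $D_4$ DZ hierarchy is the Drinfeld--Sokolov hierarchy, is already available from \cite{GM05,FGM10}.
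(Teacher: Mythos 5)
Your proposal is correct and follows essentially the same route as the paper: verify by direct computation that $K_\tu^{\mathrm{DS}} = K_\tu^\DR$ and that $\oh_{1,1}=\og_{1,1}$ in normal coordinates (up to a $\d_x$-exact term), invoke the criterion of \cite{BG16} that these two pieces of data suffice for hierarchies with the same genus-$0$ part to coincide, and conclude via the identification of the DZ hierarchy with the $D_4$ Drinfeld--Sokolov hierarchy from \cite{GM05,FGM10}. No gaps to report.
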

\begin{proof}
We have established by direct computation that $K_\tu^\DZ = K_\tu^\DR$ and $\oh_{1,1} = \og_{1,1}$. It was proved in \cite{BG16} that this is sufficient for DZ or DR hierarchies of a CohFT with the same genus $0$ part to coincide at all genera. The identification of the DZ hierarchy of the $D_4$ Dubrovin-Saito CohFT with the $D_4$ Drinfeld-Sokolov hierarchy was proved in \cite{GM05,FGM10}.
\end{proof}

\subsection{$B_3$ and $G_2$ double ramification hierarchies}

In \cite{FFJMR10} it is proved that the $D_4$ Dubrovin-Saito CohFT and the FJRW theory of the singularity $D_4:W=x^3+xy^2$ with symmetry group $\langle J_1 \rangle$ where $J_1(x,y) = (e^{2 \pi i \frac{1}{3}}x,e^{2 \pi i \frac{1}{3}}y)$ and of the singularity $D_4^T:W=x^3y+y^2$ with symmetry group  $G_{\mathrm{max}}=\langle J_2 \rangle$ where $J_2(x,y) = (e^{2\pi i \frac{1}{6}}x,e^{2 \pi i \frac{1}{2}})$ are all isomorphic.\\

In \cite{LRZ15} it was shown that the FJRW theory for $(D_4,\langle J_1 \rangle)$ and $(D_4^T,G_{\mathrm{max}})$ carry are invariant with respect to the action of further symmetry groups $\mbZ_3$ and $\mbZ_2$ respectively.\\

Via the isomorphism and using our presentation from section \ref{section:DSCohFT} for the Dubrovin-Saito CohFT, we can express the action of the generators of these symmetries on the phase space $V$ in the following way:
\begin{equation}\label{eq:symm}
\mbZ_2: \left\{
\begin{array}{l}
e_1\mapsto e_1\\
e_2 \mapsto e_2\\
e_3 \mapsto e_3\\
e_4\mapsto -e_4
\end{array}
\right., \qquad
\mbZ_3: \left\{
\begin{array}{l}
e_1\mapsto e_1\\
e_2 \mapsto -\frac{1}{2}e_2-\frac{3}{2}e_4\\
e_3 \mapsto e_3\\
e_4\mapsto \frac{1}{2}e_2-\frac{1}{2}e_4
\end{array}\right..\end{equation}
In fact one can directly verify that the corresponding actions on the coordinates $(t^1,\ldots,t^4)$ leave the Frobenius potential (\ref{eq:potential}) unchanged and, by unicity of the Givental-Teleman reconstruction, this symmetry is inherited by the Dubrovin-Saito CohFT at all genera.\\

It was proved in \cite{LRZ15} that the restriction of a CohFT with a finite symmetry to the invariant subspace of $V$ is a partial CohFT (i.e. a system of linear maps satisfying all the axioms of a CohFT with the exception of the gluing axiom at non-separating nodes). The authors then remark that, although the Dubrovin-Zhang hierarchy cannot be directly constructed for partial CohFTs, the two restrictions of the $D_4$ Dubrovin-Zhang hierarchy to the invariant sectors with respect to the two above symmetries give well defined integrable systems, isomorphic respectively to the $B_3$ and $G_2$ Drinfeld-Sokolov hierarchies.\\

In contrast, in \cite{BDGR18} it was shown how that the double ramification hierarchy construction works for partial CohFTs too. All this then proves the following theorem.\\

\begin{theorem}
The double ramification hierarchy associated to the restriction of the $D_4$ Dubrovin-Saito CohFT to the invariant subspace with respect to the $\mbZ_2$ action described in (\ref{eq:symm}) is the integrable hierarchy with Hamiltonian operator $$K_u^\DR = \begin{pmatrix}
0 & 0 & 6\d_x \\
0 & 6 \d_x & 0 \\
6 \d_x & 0 & 0 
\end{pmatrix}$$ and Hamiltonian densities defined by equation (\ref{eq:DRrecursion}), where $\og_{1,1}$ is obtained from the $D_4$ Hamiltonian (\ref{eq:DRHam}) by imposing $u^4_k=0$. It is equivalent to the $B_3$ Drinfeld-Sokolov hierarchy.\\

The double ramification hierarchy associated to the restriction of the $D_4$ Dubrovin-Saito \hbox{CohFT} to the invariant subspace with respect to the $\mbZ_3$ action described in (\ref{eq:symm}) is the integrable hierarchy with Hamiltonian operator $$K_u^\DR = \begin{pmatrix}
0 & 6\d_x \\
6 \d_x & 0 
\end{pmatrix}$$ and Hamiltonian densities defined by equation (\ref{eq:DRrecursion}), where $\og_{1,1}$ is obtained from the $D_4$ Hamiltonian (\ref{eq:DRHam}) by imposing $u^2_*=u^4_*=0$. It is equivalent to the $G_2$ Drinfeld-Sokolov hierarchy.
\end{theorem}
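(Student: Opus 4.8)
The plan rests on the principle, proved in \cite{BDGR16b,BDGR18}, that the entire double ramification hierarchy of a (partial) cohomological field theory is reconstructed from the single Hamiltonian $\og_{1,1}\in\hLambda^{[0]}$ together with the Poisson operator $\eta^{\alpha\beta}\d_x$ via the recursion \eqref{eq:DRrecursion}. So I would first identify, for each symmetry, the invariant subspace and the restricted metric. For the $\mbZ_2$ action of \eqref{eq:symm} the invariant subspace is $V^{\mbZ_2}=\langle e_1,e_2,e_3\rangle$; for the $\mbZ_3$ action the matrix induced on $\langle e_2,e_4\rangle$ has characteristic polynomial $\lambda^2+\lambda+1$, hence no nonzero fixed vector, and $V^{\mbZ_3}=\langle e_1,e_3\rangle$. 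Both are coordinate subspaces, and since the metric $\eta$ of section~\ref{section:DSCohFT} is manifestly block-diagonal with respect to the corresponding splittings of the basis, the restricted metrics are obtained by deleting the $e_4$ (resp.\ the $e_2,e_4$) rows and columns of $\eta$; inverting yields exactly the Hamiltonian operators $K_u^\DR$ in the statement.

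Next I would show that $\og_{1,1}$ of the restricted partial CohFT is the restriction of the $D_4$ Hamiltonian \eqref{eq:DRHam} obtained by imposing $u^4_*=0$ (for $B_3$) or $u^2_*=u^4_*=0$ (for $G_2$). This is immediate from the construction: $\og_{1,1}$ is a generating series whose monomial $\prod u^{\alpha_i}_{k_i}$ encodes the insertions $e_{\alpha_i}$, with coefficient built from intersection numbers of $\psi$-, $\lambda$- and double ramification classes against $c_{g,n}^{D_4}$; setting the non-invariant variables to zero retains the monomials with all insertions in $V^G$, whose coefficients are unchanged because $c^{D_4}_{g,n}$ restricted to $(V^G)^{\otimes n}$ \emph{is} the partial CohFT. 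The same holds for every density $g_{\alpha,d}$ with $e_\alpha\in V^G$, so the whole DR hierarchy of the restricted partial CohFT is the restriction of the $D_4$ DR hierarchy; to see this through \eqref{eq:DRrecursion}, one checks that the recursion descends, using that $\d_x$ and the Euler operator $D$ preserve the ideal of non-invariant variables, that $\eta^{\alpha\beta}\d_x$ is block-diagonal, and that the ``cross'' terms of $\{\,\cdot\,,\og_{1,1}\}$ coming from the non-invariant block vanish after restriction because the $G$-invariant $\og_{1,1}$ has no part linear in the non-invariant variables (equivalently, $\tfrac{\delta\og_{1,1}}{\delta u^\beta}$ vanishes on $V^G$ for $e_\beta\notin V^G$), as the non-invariant part of $V$ carries no trivial $G$-subrepresentation.

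Finally I would deduce the equivalence with the $B_3$ and $G_2$ Drinfeld--Sokolov hierarchies. By the Corollary above, the $D_4$ DR hierarchy, in normal coordinates, equals the $D_4$ Dubrovin--Zhang hierarchy, which by \cite{GM05,FGM10} is the $D_4$ Drinfeld--Sokolov hierarchy. By \cite{LRZ15} the restrictions of the $D_4$ Dubrovin--Zhang hierarchy to the $\mbZ_2$- and $\mbZ_3$-invariant sectors are, up to Miura transformation, the $B_3$ and $G_2$ Drinfeld--Sokolov hierarchies. Since the normal coordinate $\tu^\alpha=\eta^{\alpha\mu}\tfrac{\delta\og_{\mu,0}}{\delta u^1}$ of the restricted hierarchy is the restriction of the $D_4$ one (the sum over $\mu$ runs only over $V^G$ when $e_\alpha\in V^G$), passing to normal coordinates commutes with restriction, and the DR hierarchy of each restricted partial CohFT --- being the restriction of the $D_4$ DR hierarchy --- is therefore equivalent to the restriction of the $D_4$ Drinfeld--Sokolov hierarchy, i.e.\ to the $B_3$ (resp.\ $G_2$) Drinfeld--Sokolov hierarchy.

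I expect the main obstacle to be the middle step: making precise that the DR recursion \eqref{eq:DRrecursion}, which for partial CohFTs serves as the definition of the hierarchy, genuinely descends to the invariant subspace, and in particular that the orthogonal block of the Poisson bracket leaves no residue after restriction; the representation-theoretic vanishing of the linear-in-non-invariant-variables part of $\og_{1,1}$ is exactly what closes this gap. A more computational alternative, sidestepping this, is to repeat the argument of Theorem~\ref{thm:classicalD4} for the restricted $\og_{1,1}$: verify hypotheses (a), (b) of Theorem~\ref{theorem:recursion->integrability classical} with the restricted metric (so that it is of DR type), and then match it against the known tau-symmetric Hamiltonian $\oh_{1,1}$ and first Poisson structure of the $B_3$ and $G_2$ Drinfeld--Sokolov hierarchies, via the analogue of \cite{BG16} for those types.
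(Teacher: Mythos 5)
Your proposal is correct and follows essentially the same route as the paper: restriction to the invariant sectors gives partial CohFTs whose DR hierarchies are well defined (\cite{LRZ15}, \cite{BDGR18}) and are obtained from the $D_4$ data by setting the non-invariant variables to zero, and the identification with the $B_3$ and $G_2$ Drinfeld--Sokolov hierarchies then follows from the Corollary (DR $=$ DZ $=$ DS for $D_4$) together with the result of \cite{LRZ15} on the restricted Dubrovin--Zhang hierarchies. The extra care you take in checking that the recursion \eqref{eq:DRrecursion} and the normal coordinates commute with restriction (via the $\eta$-orthogonality of the invariant subspaces and the absence of a trivial subrepresentation in their complements) is exactly the content the paper leaves implicit.
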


\section{Quantum double ramification hierarchy for the $D_4$ Dubrovin-Saito CohFT}

\subsection{Quantum double ramification hierarchy}
In \cite{BR16b} Buryak's original definition of the integrable hierarchy associated to a CohFT was upgraded to include a quantization of the classical double ramification hierarchy. It consists of a system of quantum differential polynomials $G_{\alpha,d} \in (\hcA^\hbar)^{[\leq 0]}$, $1\leq \alpha\leq N$, $d\geq -1$ , where $\hcA^\hbar = \hcA[[\hbar]]$ and $\deg \hbar = -2$, such that $G_{\alpha,d}|_{\hbar = 0} = g_{\alpha,d}$. Similarly we define $\hLambda^\hbar = \hcA^\hbar/(\mathrm{Im} \d_x \oplus \mbC[[\eps,\hbar]])$ and the images $\oG_{\alpha,d} = \int G_{\alpha,d} dx$ of $G_{\alpha,d}$ in this quotient commute with respect to a star product that canonically quantizes the Hamiltonian operator $K^\DR = \eta \d_x$, i.e. 
$$[\oG_{\alpha,p},\oG_{\beta,q}] = 0, \qquad 1\leq \alpha,\beta\leq N, \quad p,q\geq -1$$
where, for any $f\in \hcA^\hbar$ and $\og\in \hLambda^\hbar$, $[\overline{f},\og]=\int[f,\og]dx$ and
\begin{equation}\label{eq:commutator}
\begin{split}
[f,\og]=\sum_{\substack{n\geq 1\\ r_1,\ldots ,r_n\geq 0\\ s_1,\ldots,s_n\geq 0}} \frac{(-i)^{n-1} \hbar^{n}}{n!}  &\frac{\partial^n f}{\partial u^{\alpha_1}_{s_1}\ldots \partial u^{\alpha_n}_{s_n}}  (-1)^{\sum_{k=1}^n r_k}  \left( \prod_{k=1}^n \eta^{\alpha_k \beta_k}\right) \times\\
& \times \sum_{j=1}^{2n-1+\sum_{k=1}^n (s_k+r_k)} C_j^{s_1+r_1+1,\ldots,s_n+r_n+1}  \partial_x^j  \frac{\partial^n g}{\partial u^{\beta_1}_{r_1}\ldots \partial u^{\beta_n}_{r_n}}.
\end{split}
\end{equation}
with
\begin{gather}\label{eq:relation of coefficients}
C_j^{a_1,\ldots,a_n}=
\begin{cases}
(-1)^{\frac{1}{2}(n-1+\sum a_i-j)}\tC_j^{a_1,\ldots,a_n},&\text{if $j=n-1+\sum_{i=1}^n a_i\ (\mathrm{mod}\ 2)$},\\
0,&\text{otherwise}.
\end{cases}
\end{gather}
and
\begin{gather}\label{eq:decomposition}
\prod_{i=1}^k\Li_{-d_i}(z)=\sum_{j=1}^{k-1+\sum d_i}\tC^{d_1,\ldots,d_k}_j\Li_{-j}(z), \qquad \Li_{-d}(z):=\sum_{k\ge 0}k^d z^k.
\end{gather}

In case the CohFT is homogeneous of degree $\deg c_{g,n} = \delta(g-1)$, then we have the following grading for the quantum DR hierarchy:
$$|u^\alpha_k| = 1-\deg e_\alpha, \qquad |\eps| = \frac{1-\delta}{2}, \qquad |\hbar| = 2-\delta, \qquad |G_{\alpha,d}|=d+3-\delta-|u^\alpha|$$

In light of the results of section \ref{section:DRclassical}, the DR quantization procedure can be applied to obtain a quantization of the $D_4$ Drinfeld-Sokolov hierarchy, similarly to what was achieved in \cite{BG16} for the $A_n$ Gelfand-Dickey hierarchies, for $n\le 4$.\\

\subsection{Quantum systems of DR type and the quantum $D_4$ hierarchy}

In \cite{BDGR16b} the quantum version of Theorem \ref{theorem:recursion->integrability classical} was also proved. We recall it here as, again, it will be applied to compute the $D_4$ quantum DR hierarchy. Let us consider the quantum Hamiltonian system defined by a Hamiltonian $\oH\in (\hLambda^\hbar)^{[\leq 0]}$ with respect to the standard quantum commutator introduced above. Consider the operator $\cD^\hbar_\oH:\hcA^\hbar[[z]]\to\hcA^\hbar[[z]]$ defined by
\begin{equation*}
\begin{split}
& \cD^\hbar_\oH f(z)= \d_x \circ (D-1) - \frac{z}{\hbar}[\cdot,\oH], \qquad D:=\sum_{k\geq 0} u^\alpha_k + \eps \frac{\d}{\d \eps} +2 \hbar \frac{\d}{\d \hbar}.
\end{split}
\end{equation*}
We warn the reader that we have upgraded the definition of the operator $D$. Suppose there exist $N$ solutions $G_{\alpha}(z) \in (\hcA^\hbar)^{[\leq 0]}[[z]]$, $\alpha=1,\ldots,N$, to $\cD^\hbar_\oH G_\alpha(z)=0$ with the initial conditions $G_{\alpha}(z=0)=\eta_{\alpha\mu}u^\mu$. Then a new vector of solutions can be found by the following transformation
\begin{equation}\label{eq:solution transf}
G_{\alpha}(z) \mapsto A^\mu_\alpha(z) G_\mu(z) + B_\alpha(z),
\end{equation}
where $A^\mu_\alpha(z)=\delta^\mu_\alpha + \sum_{i > 0} A^\mu_{\alpha,i} z^i \in \mbC[[z]]$ and $B_\alpha(z)=\sum_{i>0} B_{\alpha,i}(\eps,\hbar) z^i \in \mbC[[\eps,\hbar,z]]$.

\begin{theorem}[\cite{BDGR16b}]\label{theorem:recursion->integrability}
Assume that $\oH \in (\hLambda^\hbar)^{[\leq 0]}$ has the following properties:
\begin{itemize}
\item[(a)] there exist $N$ independent solutions $G_{\alpha}(z) = \sum_{p\geq 0} G_{\alpha,p-1} z^p\in (\hcA^\hbar)^{[\leq 0]}[[z]]$, $\alpha=1,\ldots,N$, to the equation
\begin{equation}\label{eq:operator recursion}
\cD^\hbar_\oH G_\alpha(z) = 0
\end{equation}
with the initial conditions $ G_{\alpha}(z=0)=\eta_{\alpha\mu}u^\mu$,
\item[(b)] $\displaystyle \frac{\delta \oH}{\delta u^1} = \frac{1}{2}\eta_{\mu \nu } u^\mu u^\nu + \d_x R + c(\eps,\hbar), \qquad R\in (\hcA^\hbar)^{[\le -1]}, \quad c(\eps,\hbar) \in \mbC[[\eps,\hbar]]$,
\item[(c)] $\oG_{1,1} = \oH$.
\end{itemize}
Then, up to a transformation of type (\ref{eq:solution transf}), we have
\begin{itemize}
\item[(i)] $\displaystyle \oG_{1,0} =  \int \left( \frac{1}{2}\eta_{\mu \nu } u^\mu u^\nu \right)dx$,
\item[(ii)] $\displaystyle [\oG_{\alpha,p},\oG_{\beta,q}] = 0, \qquad \alpha,\beta=1,\ldots,N,\quad p,q\geq -1$,
\item[(iii)] $\displaystyle \frac{1}{\hbar}[G_{\alpha,p},\oG_{\beta,0}] = \d_x \frac{\d G_{\alpha,p+1}}{\d u^\beta}, \qquad \beta=1,\ldots,N, \quad p\geq -1,$ 
\item[(iv)] $\displaystyle \frac{\d G_{\alpha,p}}{\d u^1} = G_{\alpha,p-1}, \qquad \alpha=1,\ldots,N, \quad p\geq -1$,
\end{itemize}
\end{theorem}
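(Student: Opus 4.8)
The argument parallels that of Theorem~\ref{theorem:recursion->integrability classical}, to which it specializes in the classical limit: the quantum bracket $\frac1\hbar[\,\cdot\,,\oH]$ reduces modulo $\hbar$ to the Poisson bracket $\{\,\cdot\,,\oH\}$, the operator $\cD^\hbar_\oH$ to $\cD_{\oh}$ with $\oh=\oH|_{\hbar=0}$, and hypothesis (c) becomes the classical conclusion $\og_{1,1}=\oh$. The plan is to reproduce the classical argument while tracking the $\hbar$-corrections: (1) expand $\cD^\hbar_\oH G_\alpha(z)=0$ in powers of $z$ to obtain a recursion for the $G_{\alpha,p}$; (2) prove property (iv) by induction on $p$, using a quantum variational Leibniz rule together with hypothesis (b); (3) deduce (i), then (iii), and finally the commutativity (ii).

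For (1), writing $G_\alpha(z)=\sum_{p\ge0}G_{\alpha,p-1}z^p$, the coefficient of $z^0$ in $\cD^\hbar_\oH G_\alpha(z)$ is $\partial_x(D-1)(\eta_{\alpha\mu}u^\mu)$, which vanishes since $\eta_{\alpha\mu}u^\mu$ is $D$-homogeneous of degree $1$, while the coefficient of $z^{p+2}$ gives
\begin{equation}\label{eq:sketchrec}
\partial_x(D-1)G_{\alpha,p+1}=\tfrac1\hbar[G_{\alpha,p},\oH],\qquad p\ge-1 .
\end{equation}
Together with $G_{\alpha,-1}=\eta_{\alpha\mu}u^\mu$ and hypothesis (c), (\ref{eq:sketchrec}) determines each $G_{\alpha,p}$ up to the kernel of $\partial_x(D-1)$; keeping throughout the $D$-homogeneous representative of the expected degree $p+2$, this residual freedom reduces to constants in $\mbC[[\eps,\hbar]]$, which are precisely those absorbed by a transformation of type (\ref{eq:solution transf}).

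For (2), the starting point is that $\partial/\partial u^1_0$ is a derivation of $\hcA^\hbar$ commuting with $\partial_x$ and satisfying $[\partial/\partial u^1_0,D]=\partial/\partial u^1_0$. Differentiating (\ref{eq:commutator}) termwise then gives, for $\oH=\int g\,dx$,
\[
\frac{\partial}{\partial u^1_0}\Bigl(\tfrac1\hbar[f,\oH]\Bigr)=\tfrac1\hbar\Bigl[\tfrac{\partial f}{\partial u^1_0},\oH\Bigr]+\tfrac1\hbar\Bigl[f,\textstyle\int\tfrac{\partial g}{\partial u^1_0}\,dx\Bigr];
\]
since $\int\tfrac{\partial g}{\partial u^1_0}\,dx=\int\tfrac{\delta\oH}{\delta u^1}\,dx$ equals $\int\tfrac12\eta_{\mu\nu}u^\mu u^\nu\,dx$ by hypothesis (b) (the remaining terms being $\partial_x$-exact or constant), and since a direct inspection of (\ref{eq:commutator}) shows $\frac1\hbar\bigl[f,\int\tfrac12\eta_{\mu\nu}u^\mu u^\nu\,dx\bigr]=\partial_x f$ exactly --- only the $n=1$ term contributes, the higher ones involving $\partial_x^j$ of constants with $j\ge1$ --- we obtain
\[
\frac{\partial}{\partial u^1_0}\Bigl(\tfrac1\hbar[f,\oH]\Bigr)=\tfrac1\hbar\Bigl[\tfrac{\partial f}{\partial u^1_0},\oH\Bigr]+\partial_x f .
\]
Applying this with $f=G_{\alpha,p}$ in (\ref{eq:sketchrec}) and using $[\partial/\partial u^1_0,\partial_x]=0$, $[\partial/\partial u^1_0,D]=\partial/\partial u^1_0$ and the inductive hypothesis $\partial G_{\alpha,p}/\partial u^1=G_{\alpha,p-1}$, one gets $\partial_x D\bigl(\partial G_{\alpha,p+1}/\partial u^1-G_{\alpha,p}\bigr)=0$; the difference is $D$-homogeneous of positive degree and $\partial_x$-closed, hence a constant in $\mbC[[\eps,\hbar]]$, removed by (\ref{eq:solution transf}). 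This proves (iv). Conclusion (i) then follows from the $z^1$-coefficient of (\ref{eq:sketchrec}) and (b); property (iii) is obtained, as in the classical case, using the identity $\og_{\beta,0}=\int\tfrac{\delta\oH}{\delta u^\beta}\,dx$ (a consequence of (\ref{eq:sketchrec}) at $p=-1$ and the $D$-homogeneity of $G_{\beta,0}$); and commutativity (ii) is then formal --- integrating (iii) over $x$ gives $[\oG_{\alpha,p},\oG_{\beta,0}]=0$, and the general $q$ follows by induction on $q$ from the Jacobi identity for $[\,\cdot\,,\cdot\,]$ and the recursion (\ref{eq:sketchrec}).

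The principal difficulty is faithfully carrying this out with the full quantum commutator (\ref{eq:commutator}) in place of the Poisson bracket: one must establish the termwise-differentiation Leibniz rule and verify the two ``mild'' identities $\frac1\hbar[f,\int\tfrac12\eta_{\mu\nu}u^\mu u^\nu\,dx]=\partial_x f$ and $\og_{\beta,0}=\int\tfrac{\delta\oH}{\delta u^\beta}\,dx$ to all orders in $\hbar$, which requires controlling the combinatorial coefficients $C^{a_1,\dots,a_n}_j$ of (\ref{eq:relation of coefficients})--(\ref{eq:decomposition}); one must also check that the $D$-homogeneous choices and the transformations (\ref{eq:solution transf}) keep every $G_{\alpha,p}$ inside the class $(\hcA^\hbar)^{[\leq 0]}$ on which the star product is defined. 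As already in the classical theorem, the derivation of (iii) is the most involved step; the rest is a formal consequence of the Leibniz rule and the two identities together with hypotheses (b) and (c).
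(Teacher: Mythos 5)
You should first be aware that the paper does not prove this theorem at all: it is quoted verbatim from \cite{BDGR16b}, so there is no in-paper argument to compare with, and your sketch has to be measured against the proof in that reference. The portion of your outline dealing with the easier items is sound and is in the spirit of that proof: extracting the recursion from the $z$-expansion of \eqref{eq:operator recursion}; the termwise Leibniz rule for $\d/\d u^1$ applied to \eqref{eq:commutator} (legitimate since $[\d/\d u^1_0,\d_x]=0$); the check that in $\frac{1}{\hbar}\left[f,\int\tfrac12\eta_{\mu\nu}u^\mu u^\nu dx\right]=\d_x f$ only the $n=1$ term survives (indeed $C^{a}_j=\delta_{a,j}$ and the $n\geq 2$ terms apply $\d_x^j$, $j\geq 1$, to constants); and the induction giving (iv), with the residual ambiguity --- multiples of $\eta_{\mu\nu}u^\nu$ and elements of $\mbC[[\eps,\hbar]]$ --- absorbed by transformations \eqref{eq:solution transf}.

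However, the two hardest conclusions are not actually proved, and some of the justifications you give for them are wrong. First, the densities $G_{\alpha,p}$ are \emph{not} $D$-homogeneous (already $g_{1,0}=\tfrac12\eta_{\mu\nu}u^\mu u^\nu+O(\eps^2)$ mixes $D$-degrees), so both your bookkeeping ``keep the $D$-homogeneous representative of degree $p+2$'' and, more seriously, your justification of $\oG_{\beta,0}=\int\frac{\delta\oH}{\delta u^\beta}dx$ ``by $D$-homogeneity of $G_{\beta,0}$'' fail: the $p=-1$ recursion only gives $(D-1)G_{\beta,0}=\frac{\delta\oH}{\delta u^\beta}+\mathrm{const}$, and to reach the claimed identity you need $\int (D-2)G_{\beta,0}\,dx=0$, which you do not establish. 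Second, conclusion (iii) --- by your own admission the crux --- is dispatched with ``as in the classical case'', but the classical proof is likewise only cited (Theorem \ref{theorem:recursion->integrability classical} carries no proof in this paper), and in the quantum setting the bracket with the nonlinear functional $\oG_{\beta,0}$ receives contributions from all $n\geq 1$ terms of \eqref{eq:commutator}, so no reduction analogous to your linear-functional computation is available; this step genuinely needs the lemmas of \cite{BDGR16b}. Third, your induction for (ii) does not close: integrating the recursion gives $[\oG_{\alpha,p},\oH]=0$, and Jacobi then only yields $[[\oG_{\alpha,p},\oG_{\beta,q}],\oH]=0$, which does not force the inner commutator to vanish; moreover the recursion determines $G_{\beta,q+1}$ only through $\d_x(D-1)$ of its density, and integration over $x$ kills exactly that information, so ``induction on $q$ from Jacobi and the recursion'' is not an argument. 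As it stands your text is a plausible plan whose gaps sit precisely where the cited proof does its real work (vanishing lemmas for commutators of local functionals via $u^1$-derivative and degree arguments), not a proof.
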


Again, we call a system of densities like the one described in the theorem above a \emph{quantum integrable system of double ramification type} and we recall from \cite{BDGR16b} that the quantum DR hierarchy of any CohFT is of this type. In particular one has the quantum version of equation \ref{eq:DRrecursion},
\begin{equation}\label{eq:qDRrecursion}
\d_x (D-1) G_{\alpha,d+1} = \frac{1}{\hbar}[G_{\alpha,d},\oG_{1,1}], \qquad G_{\alpha,-1}=\eta_{\alpha\mu}u^\mu, \qquad 1\leq\alpha\leq N, d\geq -1.
\end{equation}
The following theorem is readily proved by direct computation along the line of Theorem \ref{thm:classicalD4}.
\begin{theorem}\label{thm:classicalD4}
The quantum double ramification hierarchy for the $D_4$ Dubrovin-Saito cohomological field theory is the quantum integrable hierarchy with quantum commutator given by (\ref{eq:commutator}) and quantum Hamiltonian densities defined by equation (\ref{eq:qDRrecursion}), where
\begin{equation*}
\begin{split}
\oG_{1,1} = \og_{1,1} + i \hbar  \int & \left[-\frac{13  u^3 \eps ^4}{9072}+\left(-\frac{1}{540}  \left(u^3\right)^2+\frac{1}{108}  \left(u_1^3\right)^2-\frac{1}{135}  \left(u_2^3\right)^2\right) \eps ^2-\frac{1}{432}\left(u^3\right)^3 \right. \\
   & \left.+\frac{1}{144}  u^3 \left(u_1^3\right)^2-\frac{ u^1}{6}\right]\ dx
\end{split}
\end{equation*}
\end{theorem}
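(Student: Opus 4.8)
The plan is to mirror the proof of the classical case, replacing the Poisson bracket by the quantum commutator \eqref{eq:commutator} and the classical recursion criterion by its quantization, Theorem~\ref{theorem:recursion->integrability}. By \cite{BDGR16b} the quantum double ramification hierarchy of any cohomological field theory is a quantum integrable system of double ramification type, so the quantum Hamiltonian $\oG_{1,1}$ of the $D_4$ Dubrovin-Saito CohFT satisfies hypotheses (a), (b), (c) of Theorem~\ref{theorem:recursion->integrability}; moreover $\oG_{1,1}|_{\hbar=0}=\og_{1,1}$ is the classical Hamiltonian \eqref{eq:DRHam} already computed, and $\oG_{1,1}$ is homogeneous with $|\oG_{1,1}|=\frac73$ for $\delta=\frac23$. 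It therefore suffices to show that these requirements pin down $\oG_{1,1}$ uniquely; the expression in the statement is then forced.

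First I would use the grading to fix the shape of the quantum corrections. Adjoining $|\hbar|=2-\delta=\frac43$ to the degrees \eqref{eq:degrees}, and recalling that every $u^\alpha_k$ has strictly positive $|\cdot|$-degree, the inequality $|\hbar^2|=\frac83>\frac73=|\oG_{1,1}|$ rules out any monomial carrying two or more powers of $\hbar$. Hence
$$\oG_{1,1}=\og_{1,1}+i\,\hbar\,\overline P,\qquad \overline P\in\hLambda,\qquad |\overline P|=1,$$
the factor $i$ being the one naturally generated by the $(-i)^{n-1}$ in \eqref{eq:commutator}, so that $\overline P$ has rational coefficients; as in the classical case only even powers of $\eps$ occur. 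The condition $\oG_{1,1}\in(\hLambda^\hbar)^{[\le 0]}$ together with $|\overline P|=1$ leaves only finitely many monomials in $\overline P$ modulo $\partial_x$-exact terms and constants: for a monomial built only from $u^3_*$ and $\eps$ the weights force $2m+c=6$ with $m\ge 1$ factors $u^3_*$ and $c$ powers of $\eps$, so $c\le 4$, and monomials involving the heavier variables $u^1$, $u^2_*$, $u^4_*$ are even more restricted. Furthermore $\oG_{1,1}$ inherits the $\mbZ_2$ and $\mbZ_3$ symmetries \eqref{eq:symm} of the CohFT, which together annihilate every admissible monomial containing $u^2_*$ or $u^4_*$; so $\overline P$ is a finite linear combination of differential monomials in $u^1$, $u^3_*$ and $\eps$ only.

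Next I would impose hypotheses (a) and (b) of Theorem~\ref{theorem:recursion->integrability} on $\oH:=\og_{1,1}+i\hbar\,\overline P$. Hypothesis (b) is a linear constraint on $\frac{\delta\overline P}{\delta u^1}$ (it must be $\partial_x$-exact up to an $\eps$-dependent constant), while hypothesis (a) — the existence of four independent solutions $G_\alpha(z)\in(\hcA^\hbar)^{[\le 0]}[[z]]$ of $\cD^\hbar_{\oH} G_\alpha(z)=0$ with $G_\alpha(z=0)=\eta_{\alpha\mu}u^\mu$ — is solved recursively in powers of $z$ exactly as in the classical proof: at each order one obtains a linear system in the still undetermined coefficients of the $G_\alpha(z)$ and of $\overline P$, whose consistency forces the coefficients of $\overline P$ one by one. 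Transformations of type \eqref{eq:solution transf} change $\oH$ only through the additive constants $B_{\alpha,i}(\eps,\hbar)$, which are quotiented out in $\hLambda^\hbar$, so the resulting $\oH$ is unique; as an independent check one may compare with known universal coefficients of $\oG_{1,1}$, e.g. $\Coef_{i\hbar\,u^1}\oG_{1,1}=-\frac{\dim V}{24}=-\frac16$, the $\hbar$-counterpart of $\Coef_{(u^1_1)^2\eps^2}\og_{1,1}=-\frac{\dim V}{24}$. Solving yields the stated formula, and conversely a direct computation confirms that it satisfies (a) and (b), so that Theorem~\ref{theorem:recursion->integrability} applies and the whole quantum hierarchy is reconstructed from $\oG_{1,1}$ through \eqref{eq:qDRrecursion}.

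The main obstacle is computational rather than conceptual: the quantum commutator \eqref{eq:commutator} carries a sum over all $n\ge 1$ with the Lidstone-type coefficients $C_j^{a_1,\dots,a_n}$ produced by the polylogarithm decomposition \eqref{eq:decomposition}, and is appreciably heavier to evaluate than the classical Poisson bracket. What keeps the verification finite is again the $|\cdot|$-grading: it bounds, in every monomial that can possibly occur, both the power of $\eps$ and the number of $x$-derivatives, so that only a finite truncation of \eqref{eq:commutator} is relevant and the whole problem reduces, as in the classical case, to finite — if sizeable — linear algebra.
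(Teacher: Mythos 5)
Your proposal follows essentially the same route as the paper: use homogeneity ($|\hbar|=\tfrac43$, $|\oG_{1,1}|=\tfrac73$, all variables of positive degree) to reduce the quantum correction to a finite polynomial ansatz linear in $\hbar$, then impose the quantum DR-type conditions of Theorem~\ref{theorem:recursion->integrability} to determine it, exactly as in the classical computation. The only (minor) difference is logical bookkeeping: the paper notes that these conditions fix the correction only up to a rescaling of $\hbar$ and uses the universal identity $\Coef_{u^1 i\hbar}\oG_{1,1}=-\tfrac{1}{24}\dim V$ to fix that normalization, whereas you present this coefficient merely as an independent check — it should be kept as the normalization step rather than a consistency test.
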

\begin{proof}
Since the $D_4$ Dubrovin-Saito CohFT is homogeneous with $\delta=\frac{2}{3}$, all variables have positive degrees given by (\ref{eq:degrees}) and $|\hbar|=\frac{4}{3}$. Since $|\oG_{1,1}| = \frac{7}{3}$ the quantum correction to the classical DR Hamiltonian $\og_{1,1}$ is a polynomial and imposing the conditions of theorem \ref{theorem:recursion->integrability} determine it uniquely up to a normalization constant for $\hbar$. The latter is fixed by recalling that, by definition (see \cite{BR16b}), for any CohFT we have $\Coef_{u^1 i \hbar}\, \oG_{1,1} = -\frac{1}{24}\dim V$.
\end{proof}

\end{document}